\pgfplotsset{compat=newest}
\theoremstyle{plain}    
\theoremstyle{plain}    
\theoremstyle{plain}    
\theoremstyle{plain}    
\theoremstyle{plain}    \newtheorem{The}{Theorem}
\theoremstyle{plain}    
\theoremstyle{plain}    
\theoremstyle{plain}    
\theoremstyle{plain}    
\theoremstyle{plain}    
\theoremstyle{plain}    
\theoremstyle{plain}
\newcommand{\CausalState}   { \mathcal{S} }
\newcommand{\forward}{+}
\newcommand{\reverse}{-}
\newcommand{\forwardreverse}{\pm} 
\newcommand{\FutureCausalState} { {\CausalState}^{\forward} }
\newcommand{\PastCausalState}   { {\CausalState}^{\reverse} }
\newcommand{\lastindex}[2]{
  \edef\tempa{0}
  \edef\tempb{#2}
  \ifx\tempa\tempb
    \edef\tempc{#1}
  \else
    \edef\tempa{0}
    \edef\tempb{#1}
    \ifx\tempa\tempb
      \edef\tempc{#2}
    \else
      \edef\tempc{#1+#2}
    \fi
  \fi
  \tempc
}
\newcommand{\CSjoint}[1][,]{
   \edef\tempa{:}
   \edef\tempb{#1}
   \ifx\tempa\tempb
      \ensuremath{\FutureCausalState\!#1\PastCausalState}
   \else
      \ensuremath{\FutureCausalState#1\PastCausalState}
   \fi
}
\newif\ifpm
\edef\tempa{\forwardreverse}
\edef\tempb{\pm}
\newcommand{\kB}{k_\text{B}}  
\newcommand{\avg}[1]{\left< #1 \right>}
\begin{document}
\title{Thermodynamic Overfitting and Generalization:
\\ Energetic Limits on Predictive Complexity}

\author{Alexander B. Boyd}
\email{alecboy@gmail.com}
\affiliation{School of Physics, Trinity College Dublin, Dublin 2, Ireland}
\affiliation{Trinity Quantum Alliance, Unit 16, Trinity Technology and Enterprise Centre, Pearse Street, Dublin 2, Ireland}
\affiliation{Beyond Institute for Theoretical Science, San Francisco, California, USA}

\author{James P. Crutchfield}
\email{chaos@ucdavis.edu}
\affiliation{Complexity Sciences Center and Physics Department,
University of California at Davis, One Shields Avenue, Davis, CA 95616}

\author{Mile Gu}
\email{mgu@quantumcomplexity.org}
\affiliation{Nanyang Quantum Hub, School of Physical and Mathematical Sciences, Nanyang Technological University, 637371, Singapore}
\affiliation{Centre for Quantum Technologies, National University of Singapore, 3 Science Drive 2, 117543, Singapore}
\affiliation{MajuLab, CNRS-UNS-NUS-NTU International Joint Research Unit, UMI 3654, 117543, Singapore}

\author{Felix C. Binder}
\email{quantum@felix-binder.net}
\affiliation{School of Physics, Trinity College Dublin, Dublin 2, Ireland}
\affiliation{Trinity Quantum Alliance, Unit 16, Trinity Technology and Enterprise Centre, Pearse Street, Dublin 2, Ireland}

\date{\today}

\bibliographystyle{quantum}

\begin{abstract}

Efficiently harvesting thermodynamic resources requires a precise understanding of their structure. This becomes explicit through the lens of information engines---thermodynamic engines that use information as fuel. Maximizing the work harvested using available information is a form of physically-instantiated machine learning that drives information engines to develop complex predictive memory to store an environment's temporal correlations.  We show that an information engine's complex predictive memory poses both energetic benefits and risks. While increasing memory facilitates detection of hidden patterns in an environment, it also opens the possibility of thermodynamic overfitting, where the engine dissipates additional energy in testing. To address overfitting, we introduce thermodynamic regularizers that incur a cost to engine complexity in training due to the physical constraints on the information engine. We demonstrate that regularized thermodynamic machine learning generalizes effectively. In particular, the physical constraints from which regularizers are derived improve the performance of learned predictive models. This suggests that the laws of physics jointly create the conditions for emergent complexity and predictive intelligence.

\end{abstract}

\keywords{nonequilibrium thermodynamics, Maxwell's demon, Landauer's Principle, machine learning, generalization}

\pacs{
05.70.Ln  
89.70.-a  
05.20.-y  
05.45.-a  
}

\maketitle

\section{Introduction}
Modern machine learning has made remarkable advances mimicking our understanding of biological intelligence by incorporating biology's fundamental building blocks, e.g., in the form of neural networks~\cite{dongare2012introduction}. Similarly, thermodynamics and statistical mechanics have made considerable contributions to machine learning \cite{sohl2015deep, bahri2020statistical}. Most recently, this has come to include proposals to directly use thermodynamic systems for machine learning tasks~\cite{melanson2023thermodynamic, coles2023thermodynamic}. Meanwhile, thermodynamic concepts in machine learning found their way back to elucidating biological intelligence~\cite{Fris10a,ali2022predictive}.
This cross-fertilisation falls in line with a larger concern about how biological intelligence is embodied~\cite{gupta2021embodied}. That is, the physicality of an intelligent agent, as manifested through its environment interactions, determines how it learns about the world. In this way, it has been recognised that machine learning and pattern recognition ``can be viewed as two facets of the same field''~\cite{bishop2006pattern}.

With this in mind, the following explores the thermodynamics of pattern prediction by analyzing a learning agent who utilises information provided by its environment for the aim of maximal energy extraction. This \emph{thermodynamic machine learning} (TML) implements the \emph{principle of maximum work production}~\cite{Boyd21a} that expresses how thermodynamic efficiency is tied to optimal prediction~\cite{Stil12a}. Specifically, selecting the model that harvests the most work from an information source is equivalent to performing \emph{maximum likelihood estimation} (MLE) on predictive models of data. This highlights the thermodynamic roots of \emph{computational mechanics}~\cite{Crut88a, Crut01a, Crut12a}---the information theory of time-series prediction and structural complexity. While maximum work production is by no means guaranteed in general nonequilibrium physical processes \cite{gold2019self}, the equivalence between machine learning and thermodynamic resource maximization does offer a mechanism that drives the emergence of structural complexity and intelligence.

Here, we explore the central role of model complexity in thermodynamic machine learning by showing that principles of prediction arise from simple physical principles---work maximization, autocorrection, and model initialization. Therein, a concern arises about overfitting which occurs in machine learning when a model performs well on training data, but fails to effectively predict further inputs (test data). In this case, the estimated model is overly-specific to the training data and does not generalize to further samples~\cite{ying2019overview}. Overfitting is closely tied to high model complexity, because it often happens when
the number of model parameters exceed what is justified by a limited dataset.

In the thermodynamic setting, there is the (conversely) related \emph{principle of requisite complexity} which states that an information engine must (at least) match the structural complexity of the environment to operate efficiently \cite{Boyd17a}. In contrast, we have \emph{thermodynamic overfitting} which reflects a thermodynamic cost to excessively complex models. Practically, this cost appears to prohibit learning exceedingly large models that may exhibit ``double descent''~\cite{nakkiran2021deep} where a regime of improved learning occurs as model complexity increases beyond the regime of overfitting.

Paralleling the strategy in conventional machine learning, to mitigate overfitting we turn to regularization by incurring a training penalty that reflects model complexity~\cite{tibshirani1996regression, zhang2010regularization}. Specifically, we introduce a \emph{thermodynamic regularizer}: a thermodynamic cost to model complexity that is proportional to the work that is dissipated.  Algorithmically, adding this regularizer results in Bayesian updates of a predictive model's edge-weights---the weights that control the engine's operation. We also introduce a cost to autocorrect the engine's predictive states, which arises from starting in a uniform distribution over the engine's memory states. The net result is an effective regularization function---essentially,  a new complexity measure that quantifies the degree to which different causal predictive states make different predictions.

Notably, the following derives a thermodynamically-based prediction algorithm that parallels many modern machine learning methods for prediction. These include reservoir computing \cite{tanaka2019recent}, backpropagation through unrolling time in recurrent neural networks \cite{liao2018reviving}, and transformers \cite{vaswani2017attention} which underlie the marked effectiveness of modern large language models, such as ChatGPT.

To begin the development, the following section briefly introduces thermodynamic machine learning from basic principles. Section~\ref{sec:Constrained} describes memory-constrained work optimisation. We then characterize the performance of information engines learned through maximum work production by calculating the work production rate for both training and test data. This leads to the following main results:
\begin{enumerate}
     \setlength{\topsep}{-5pt}
      \setlength{\itemsep}{-5pt}
      \setlength{\parsep}{-5pt}
\item We derive an exact expression for the asymptotic work rate of the information engine in Thm. \ref{thm:WorkRate}, relating the engine's estimated predictive model and the true predictive model of the input process.
\item We greatly simplify the search for the maximum-work engine by analytically deriving the engine parameters in Thm. \ref{Thm:edgeweights}.
\item We identify thermodynamic overfitting through divergent dissipated work in Fig. \ref{fig:TrainingAndValidation}.
\item We introduce thermodynamic regularizers that add a cost to model complexity during training. This results in thermodynamic learning that generalizes and avoids overfitting; see Figs. \ref{fig:SingleScatter} and \ref{fig:FiveStateComparison}.
\end{enumerate}
Altogether, these results demonstrate that thermodynamic principles spontaneously produce effective predictive learning.

\section{Thermodynamic Machine Learning}

Thermodynamic machine learning arises when a physical agent tries to extract energy from a complex, noisy environment. Facing such an environment the agent predicts the environment's behavior and then converts that knowledge into useful work. This section describes the salient aspects of this process. Here, the ``agent'' may be understood as a version of Maxwell's Demon~\cite{Maxw71, Thom74a, Leff2002} confronted with correlated patterns.

The work value of information has been thoroughly explored since Szilard's proposal of his eponymous information engine~\cite{Szilard1929} and the introduction of Landauer's erasure principle~\cite{Land61a}. In essence, given information-bearing degrees of freedom characterised by a random variable $Y$ we may, on average, extract an amount of work upper-bounded by $W_{ext}$  \cite{Parr15a}:
\begin{align}
    \beta W_{ext}= \Delta H \equiv H[Y']-H[Y]
    \label{eq:W_ext}
\end{align}
by converting $Y$ to an output random variable $Y'$ under coupling with an external heat bath at inverse temperature $\beta$. Here, $H[Y]=-\sum_{y \in \mathcal{Y}} \Pr(Y=y)\ln \Pr(Y=y)$ is the Shannon entropy of random variable $Y$.  For clarity, $\mathcal{Y}$ is the physical system, $y \in \mathcal{Y}$ are realizations of states from that system, and $Y$ is the random variable that determines the distribution over those system states via $\{\Pr(Y=y)\}_{y \in \mathcal{Y}}$.

We leave aside the ongoing investigation into the conditions for saturating this relationship~\cite{esposito2010entropy,Reeb2014,Taranto2023} and consider a procedure that saturates the bound. The maximal value for $\Delta H$ is attained by maximally randomizing $Y'$ such that $H[Y']=\ln |\mathcal{Y}|$, where $|\mathcal{Y}|$ is the number of distinct values that $Y$ may take.  An agent that fully extracts this maximal amount of work from an information source $Y$ is thermodynamically efficient, because it has harvested all available free energy.

We can further decompose the average work extracted by an efficient agent expressed in Eq.~(\ref{eq:W_ext}) into its single-shot elements:
\begin{align}
 \beta \avg{W(y)}=\ln \Pr(Y=y)+\ln|\mathcal{Y}|.
 \label{eq:W_y}
\end{align}
This quantifies the average extractable work from a probabilistically occurring realization $y$ in the maximum-extraction limit of full randomization.

This is the necessary single-shot work extraction for an \emph{efficient} agent, because it must have zero total entropy production $\langle \Sigma \rangle  \equiv \langle Q \rangle /T +k_B\Delta H =0$ on average and therefore must also have zero fluctuations in entropy production $ \Sigma(y)   \equiv  Q(y)  /T + k_B\ln \frac{p_y}{|\mathcal{Y}|}=0$ \cite{Croo99a,Boyd21a}.  Here, $Q/T$ is the entropy change in the environment due to heat $Q$ and $\Delta H$ is the change in entropy of the system $\mathcal{Y}$, which together account for the total entropy produced~$\Sigma$~\cite{esposito2011second,seifert2005entropy}.

In the case where the process transforms an information reservoir, the system starts and ends in an energetically degenerate configuration such that the work production and heat are equal with opposite signs. To minimize the entropy production when harvesting energy from $\mathcal{Y}$ through a quasistatic protocol, the probability of each outcome must be encoded in the initial energy landscape \cite{garner2017thermodynamics, Boyd21a}. The parameters of an agent's estimated input are explicitly encoded in its evolving energy landscape.

Here, we are here interested in the situation where $Y$ represents a stochastic process taking values $y_{0:L} \equiv y_0y_1 \cdots y_{L-1}$ generated by a model $\theta$ with probability:
\begin{align}
    \Pr(Y^\theta_{0:L}=y_{0:L}).
\end{align}  
That is, we consider sequences of length $L$, indicated by subscript $0:L$. The work extracted by an efficient agent is then a specific case of Eq.~(\ref{eq:W_y}) with $Y = Y^\theta_{0:L}$ \cite{Boyd21a, touzo2020optimal}:
\begin{align}
    \beta\langle  W^\theta(y_{0:L})\rangle = \ln \Pr(Y^\theta_{0:L}=y_{0:L})+L \ln |\mathcal{Y}|,
    \label{eq:TML}
\end{align}

However, there is an important conceptual addition. To extract work from $Y^\theta_{0:L}$, the agent must interact with each element in sequence. Thus, the agent cannot establish a simultaneous energy landscape over all elements. Instead, the engine requires a memory system $\mathcal{X}$ that tracks what has already been observed in the sequence such that it can make optimal estimates of the next input. A work extraction device that has access to such an internal memory is called an \emph{information engine}. Such a device is a type of information ratchet \cite{Mand012a,Boyd15a}, a physically instantiated type of stochastic Turing machine, also known as a ``Brownian computer'' \cite{strasberg2015thermodynamics}.  Figure~\ref{fig:InformationRatchet} provides an illustration.  Specifically, an information engine is a ratchet whose functionality is reduced to maximally randomizing the outputs while storing all relevant correlations in the machine memory.

\begin{figure}[htbp]
\centering
\includegraphics[width=\columnwidth]{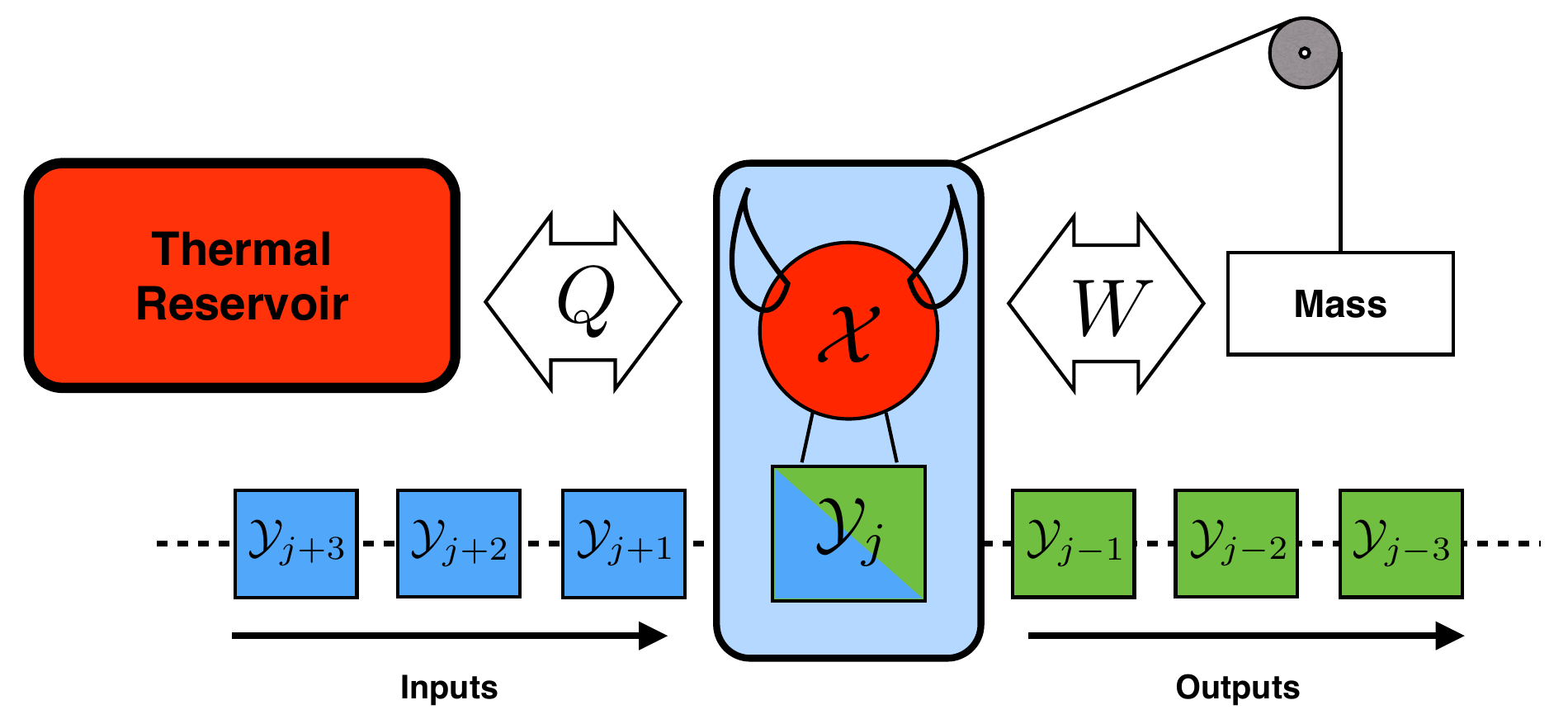}
\caption{An information engine: A physical device that can maximize work production using correlations on an information tape.  It transforms an input sequence $Y_{0:L}$ to an output sequence $Y'_{0:L}$ while exchanging energy between a work reservoir (represented by the mass on a string) and thermal reservoir at inverse temperature $\beta=1/\kB T$. The input $Y_j$ and output $Y'_j$ are stored in the physical system $\mathcal{Y}_j$.}
\label{fig:InformationRatchet}
\end{figure}

Faced with the output of a stochastic process $\mathcal{P}$, a thermodynamic agent must find a good process model $\theta$ among a candidate family of models $\Theta$. Denoting a given model $\theta$'s output distribution as $Y^\theta_{0:L}$, the likelihood of $\theta$ being the generator of a sequence~$y_{0:L}$ is:
\begin{align}
 \ell(\theta |y_{0:L}) \equiv \Pr(Y_{0:L}^\theta=y_{0:L}) ~.
\end{align}
To learn a model of he process $\mathcal{P}$ the agent may then employ \emph{maximum-likelihood estimation} (MLE) to select the model $\theta$ that maximizes the likelihood:
\begin{align}
    \Theta^\text{MLE}(y_{0:L}) = \underset{\theta \in \Theta}{\text{argmax}} \, \ell(y_{0:L}|\theta).
\end{align}
MLE is one of the most general techniques in machine learning. The resulting estimated distribution can be used for a variety of other learning tasks, including classification \cite{lin2017does}.

Next, let's consider in more detail how the agent models the process from which work is to be extracted. Note that generally any such process can be described by a \emph{hidden Markov model}~(HMM). Likewise, as the agent observes the process it must build its own model of the process. To minimize dissipation the information engine's memory must use the predictive states of the input process \cite{Boyd17a}. The agent's memory dynamics and energy landscape directly match a predictive model of $Y^\theta_{0:L}$ \cite{Boyd21a}.  The agent's task is thus to match its internal model to the process' true model.

A HMM is defined by hidden states $s$ and transitions between them according to the probabilities:
\begin{align}
     T^{(y)}_{s \rightarrow s'} \equiv \Pr(Y_i=y,S_{i+1}=s'|S_i=s),
\end{align}
outputting a symbol $y$ with the transition. Here, $Y_i$ and $S_i$ label the random variables corresponding to the output symbol and hidden state, respectively. $y$, $s$, $s'$ denote their realizations.

There are many ways to predictively model a time series $Y_{0:L}$. Among different procedures we choose minimal, predictive HMMs called \emph{$\epsilon$-machines}; see App. \ref{app:Review of Thermodynamic Machine Learning} along with examples there. For a given process there is no alternative predictive HMM that requires fewer hidden states and $\epsilon$-machines are sufficient for describing \emph{any} stochastic process giving rise to it~\cite{Crut01a}, even if nonstationary \cite{upper1997theory, Boyd21a}.  In addition, they provide a prescription for designing an information engine that harvests all available free energy from that process \cite{Boyd17a}. Thus, we choose our set of candidate models $\Theta \equiv \{ \theta \}$ to be a subset of the class of $\epsilon$-machines.  

\begin{figure}[tbp]
\centering
\includegraphics[width=1\columnwidth]{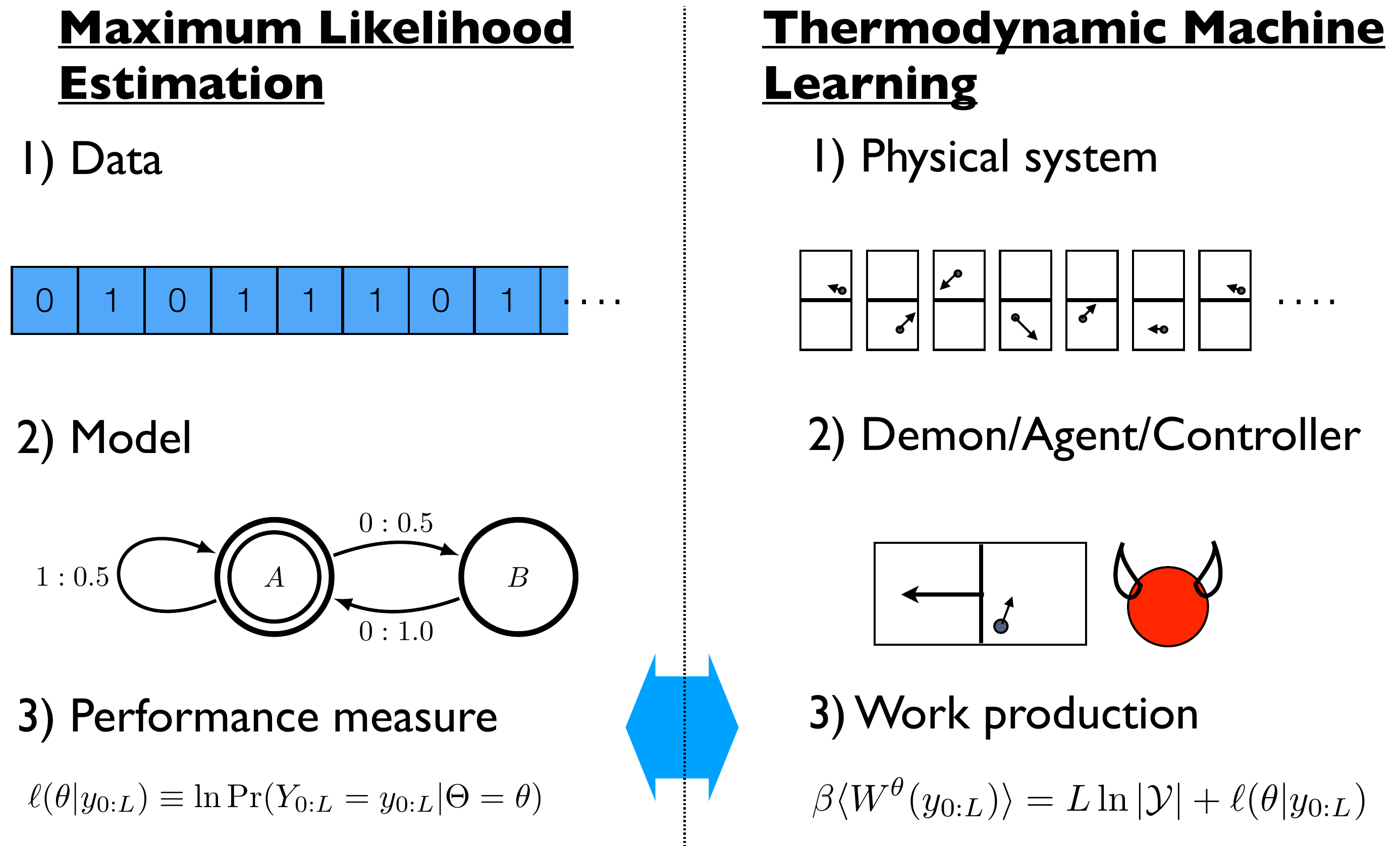}
\caption{Equivalence of thermodynamic machine learning and maximum likelihood estimation over predictive models. 1) MLE: Data (input information) is a realization of a random variable, such as a sequence of bits. TML: information is realized within a physical system, such as particles in partitioned boxes, each of which encodes a bit.  2) MLE: A model of the input information specifies the estimated probability of realizing data.  TML: An efficient demon, agent, or controller of the physical system that contains the input information must have an internal model of its estimated input.  3) MLE: The performance of the model for a particular input data is given by the log-likelihood.  TML: The performance of the efficient agent is measured by work production. The latter is linearly related to the log-likelihood.  MLE and TML are equivalent estimation processes.}
\label{fig:MLEvsTML}
\end{figure}

Finding the engine that produces the most work from given data is then equivalent to finding the engine whose model produces that information with maximum likelihood.  Thus, as illustrated in Fig.~\ref{fig:MLEvsTML} thermodynamic machine learning is equivalent to maximum likelihood estimation over predictive models \cite{Boyd21a}.  The input system encodes data, the information engine contains a model, and the performance of that engine (work production) scales proportionally to the log-likelihood. As illustrated in Fig.~\ref{fig:ThermodynamicTraining}, when performing thermodynamic learning on a collection of engines with models $\Theta$, we denote the maximum-work model for input $y_{0:L}$:
\begin{align}
    \Theta^\text{max}(y_{0:L})\equiv \underset{\theta \in \Theta  }{\text{argmax}} \langle  W^\theta(y_{0:L}) \rangle .
\end{align}
The equivalence between TML and MLE means that the inferred models from both learning strategies are the same:
\begin{align}
    \Theta^\text{max}(y_{0:L})=\Theta^\text{MLE}(y_{0:L}).
\end{align}

\begin{figure}[tbp]
\centering
\includegraphics[width=.8\columnwidth]{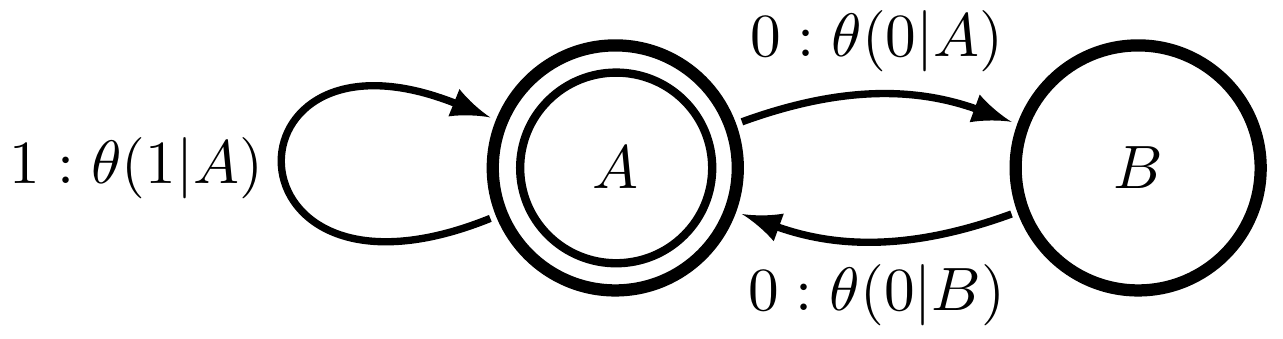}
\caption{Example $\epsilon$-machine: The Even Process with start state $A$ is described by $\theta$, which is composed of the causal states $\mathcal{S}$, outputs $\mathcal{Y}$, start state $s^*$, topology $S_{i+1}=\epsilon(Y_i,S_i)$, and edge-weights $\theta(Y_i|S_i)$. The Even Process produces sequences of zeros in even numbers.  The hidden states evolve according to $\epsilon(0,A)=B$, $\epsilon(1,A)=A$, $\epsilon(0,B)=A$, and transitions are taken with probabilities given by the edges weights $\theta(1|A)=0.5$, $\theta(0|A)=0.5$, and $\theta(0|B)=1.0$. Outputting a $0$ from $B$ has zero probability $\theta(1|B)=0.0$, so we leave $\epsilon(0,B)$ undefined. In this case, we chose the start state $s^*=A$.  Altogether, we can describe this model with the set  $\theta=\{\mathcal{S},\mathcal{Y},s^{*},\{\theta(y|s)\}_{y \in \mathcal{Y}, s \in \mathcal{S}},\{\epsilon(y,s)\}_{y \in \mathcal{Y}, s \in \mathcal{S}}\}$.}
\label{fig:ModelParameters}
\end{figure}

\begin{figure*}[htbp]
\centering
\includegraphics[width=1.7\columnwidth]{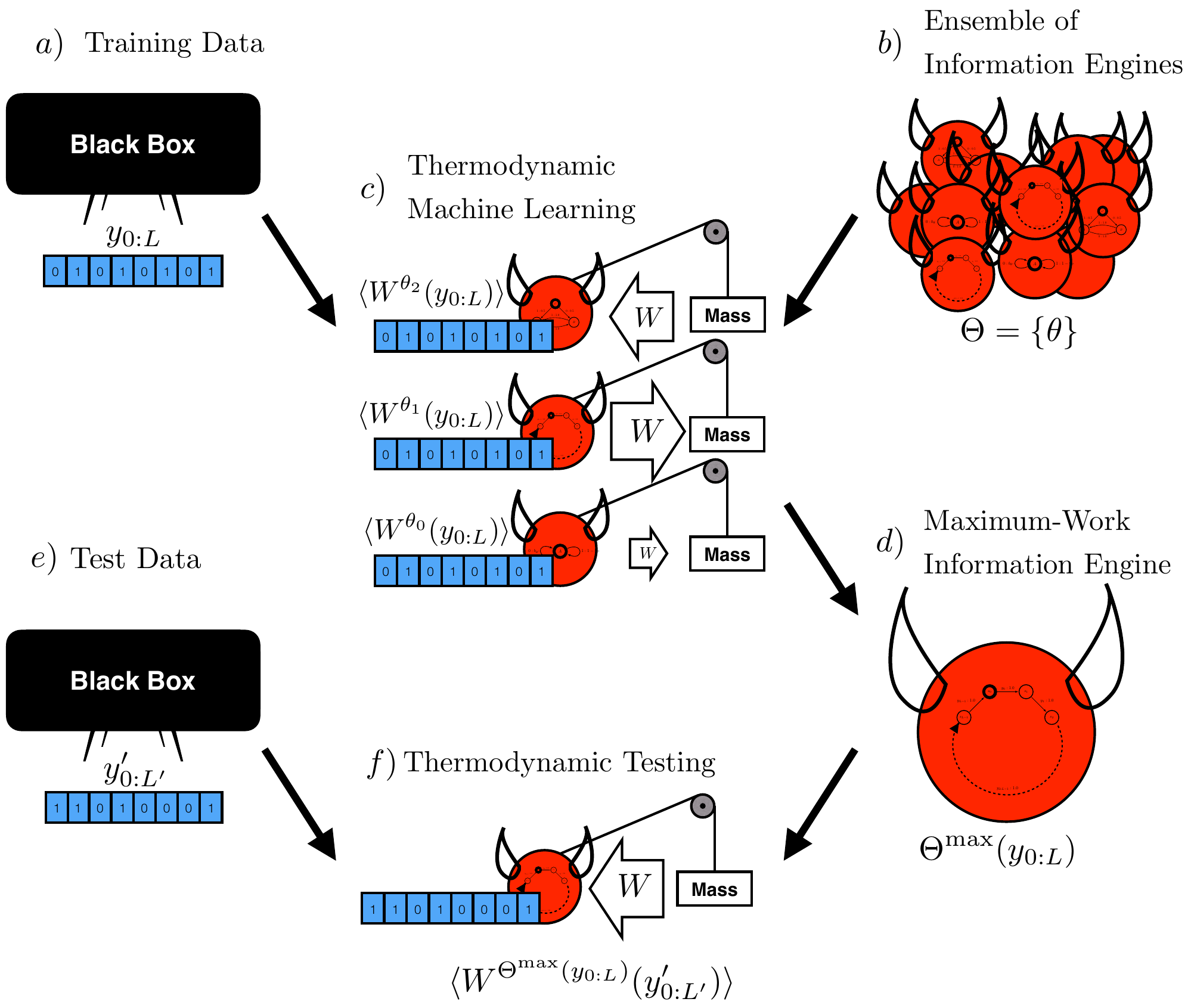}
\caption{Thermodynamic machine learning discovers the maximum-work information engine. This is followed by thermodynamic testing to validate its capacity to harvest energy using previously-unseen data:  a) We start with training data produced from a black box.  b)  Each candidate information engine from our ensemble has an internal model, represented faintly by the $\epsilon$-machine within their red body.  c)  We search through the engines to find that with the best model by determining how much work each produces from the training data. d) Thermodynamic machine learning converges on the maximum-work engine.  e) We take further test data from the black box of our environment. f) We feed the test data into our selected engine and track the work production to evaluate the effectiveness of the engine's model at capturing the process generated by the black box.}
\label{fig:ThermodynamicTraining}
\end{figure*}

Finally, $\epsilon$-machines can model any process and yield the same work production as any predictive model, meaning that it is sufficient to limit our model class $\Theta$ to them.  As described in App. \ref{app:Review of Thermodynamic Machine Learning} and Fig. \ref{fig:ModelParameters}, such machines are described by a causal update map on the hidden states (also called \emph{predictive states}):
\begin{align}
    S_{i+1}=\epsilon(Y_i,S_i),
\end{align}
and edge-weights:
\begin{align}
\theta(y|s) = \Pr(Y_i=y|S_i=s) .
\end{align}
These are the explicit parameters we must explore through training \cite{strelioff2014bayesian}.  The memory of the engine $\mathcal{X}$ is a direct copy of its model's predictive state space $\mathcal{S}$.   Once the maximum work model is found, we evaluate its complexity as the size of the engine's memory: $C=\ln|\mathcal{S}|=\ln|\mathcal{X}|$.

\section{Constrained Memory Work Maximization}
\label{sec:Constrained}

Starting from the equivalence between work maximization and MLE, the following details an algorithm for discovering predictive models of training data $y_{0:L}$ via TML.  The class of $\epsilon$-machines is a particularly appropriate model class for learning as they can produce any process $\Pr(Y_{0:\infty})$ given sufficiently many memory states (causal states) through the causal equivalence relation~\cite{Crut88a,Crut12a}. Since $\epsilon$-machines are the most general model class, potentially any pattern is discoverable via TML.  

However, the extreme generality of $\epsilon$-machines comes with a downside in learning.  For any sequence $y_{0:L}$ $\epsilon$-machines include the process that produces it with unit probability.  This means that if we allow our engine arbitrarily large memory $n=|\mathcal{X}|$, we can trivially maximize work production $ W^\theta(y_{0:L})$.  This corresponds to simply storing the training data in the engine's memory, rather than trying to discover the underlying pattern.  In this case, any other word besides the training word would be expected with zero likelihood, making this the extreme limit of overfitting.  This limiting case demonstrates that there are learning algorithms for which double-descent \cite{nakkiran2021deep} does not apply.  Moreover, it is natural to constrain memory, because it is an informational resource.

We consider the maximum-work model from the set $\Theta_n$ of $\epsilon$-machines with $n$ predictive states. This means that the engine is limited to $n$ memory states:
\begin{align}
    \Theta_n^\text{max}(y_{0:L})\equiv \underset{\theta \in \Theta_n}{\text{argmax}} \langle  W^\theta(y_{0:L}) \rangle.
\end{align}
This engine produces work during training equivalent to:
\begin{align}
   \langle W_n^\text{max}(y_{0:L}) \rangle \equiv \underset{\theta \in \Theta_n}{\text{max}} \langle  W^\theta(y_{0:L}) \rangle.
\end{align}
This is the training we execute throughout our development here: Find the maximum-work $n$-state engine that corresponds to the maximum-likelihood model from the class of predictive $n$-state HMMs.  Unlike other training algorithms---that rely on the convergence of numerical estimators---this training is essentially analytic. Given our chosen class of models, as long as we successfully enumerate all accessible topologies, we directly find the maximum-work engine among the candidates.

Recall that the outcomes of many machine learning algorithms change based on how the learning parameters are selected and whether local maxima can be escaped. In contrast, the TML algorithm always arrives at the same model for the same input data and definition of work production. The following explains how this is implemented.

The challenge of finding the maximum-work model from such a general class of processes may at first seem like a daunting optimization process over the high dimensional space of symbol-labeled stochastic matrices that specify an $\epsilon$-machine's Hidden Markov Model (HMM).  For general classical and quantum HMMs, evaluating the likelihood of $y_{0:L}$ requires a series of $L$ linear operations \cite{adhikary2020expressiveness}.  However, the properties of unifilarity allow for a useful simplified expression for the work production in terms of the $\epsilon$-map and edge-weights, as shown in Appendix \ref{app:Maximum Work Edge Weights}:
\begin{align}
\beta \langle W^\theta(y_{0:L}) \rangle = L \ln |\mathcal{Y}| + \ln \prod_{i=0}^{L-1} \theta(y_i|\epsilon(y_{0:i},s^*)).
\end{align}
For a particular topology $\epsilon$, start state $s^*$, and input word $y_{0:L}$, we can analytically find the maximum-work edge-weights by counting the number of times that the input $y$ is received by predictive state $s$ due to the input-driven dynamics of the predictive state:
\begin{align}
    N(y,s|s^*,\epsilon,y_{0:L}) \equiv \sum_{i=0}^{L-1} \delta_{y,y_i}\delta_{s,\epsilon(y_{0:i},s^*)}.
\end{align}

As shown in App. \ref{app:Maximum Work Edge Weights}, the work production can be rewritten:
\begin{align}
\beta \langle W^\theta(y_{0:L}) \rangle & = L \ln |\mathcal{Y}|
\\ & + \sum_{s,y} N(y,s|s^*,\epsilon,y_{0:L})\ln \theta(y|s). \nonumber
\end{align}
The resulting maximum-work edge-weights are derived using the method of Lagrange multipliers. They are simply the fraction of times that predictive state $s$ receives $y$ when driven by $y_{0:L}$:
\begin{align}
    \Theta^\text{max}_{s^*,\epsilon,y_{0:L}}(y|s)=\frac{N(y,s|s^*,\epsilon,y_{0:L})}{\sum_{y'} N(y',s|s^*,\epsilon,y_{0:L})}.
\end{align}
This is purely a frequentist estimate of the edge-weights based on how often each input visits each predictive causal state. This thermodynamically motivated engine design directly reflects the results of training a transformer, as described in Ref. \cite{basu2023transformers}. 

\begin{figure}[tbp]
\centering
\includegraphics[width=\columnwidth]{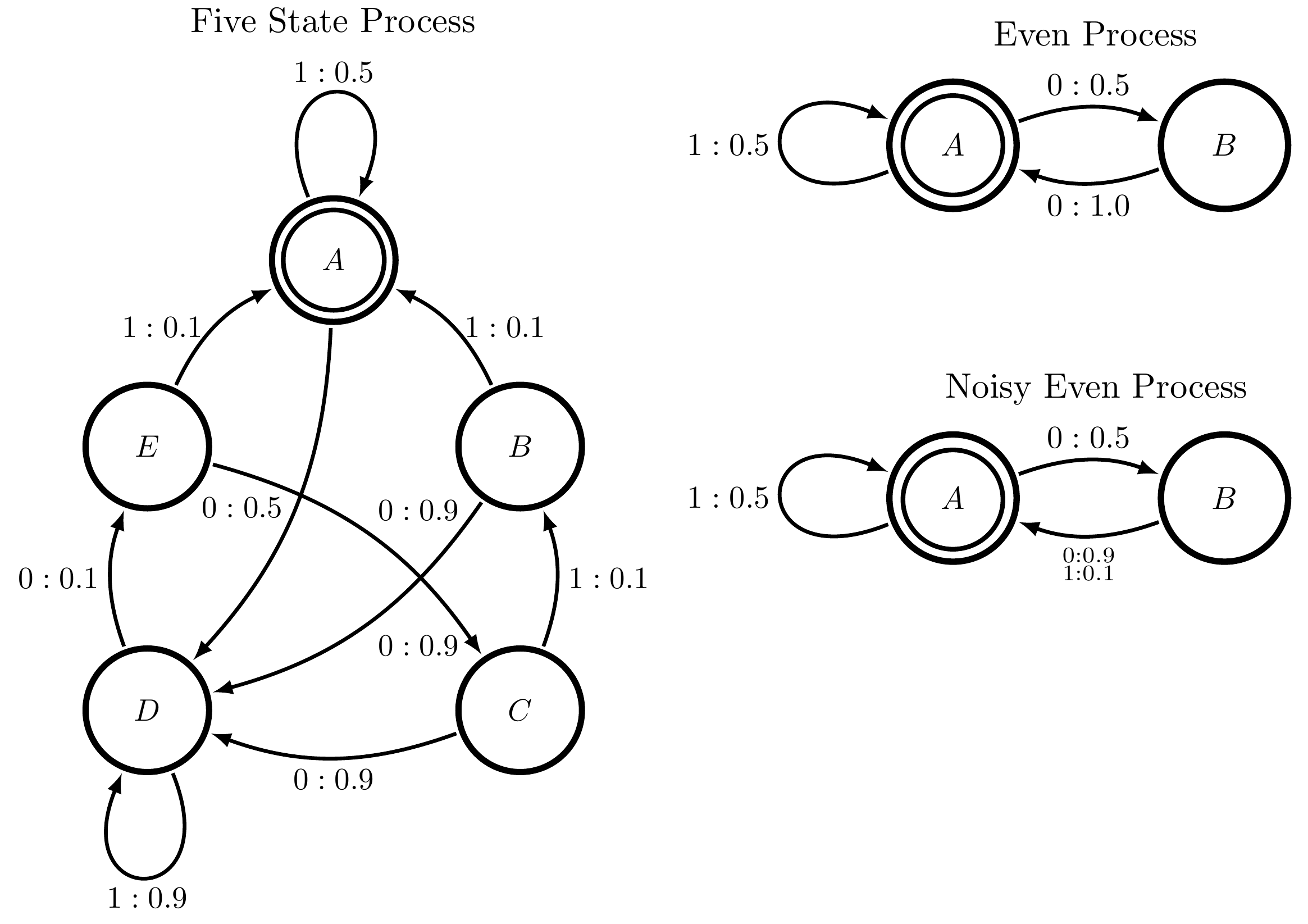}
\caption{Example $\epsilon$-machines: the ``Five-State Process,'' which is a randomly generated process with five causal states and full support, the ``Even Process,'' which produces sequences of $0$s in even numbers, and the ``Noisy Even Process'' which adds some noise to the Even Process such that it has full support.}
\label{fig:ThreeMachines}
\end{figure}

\begin{figure*}[tbp]
\centering
\includegraphics[width=2\columnwidth]{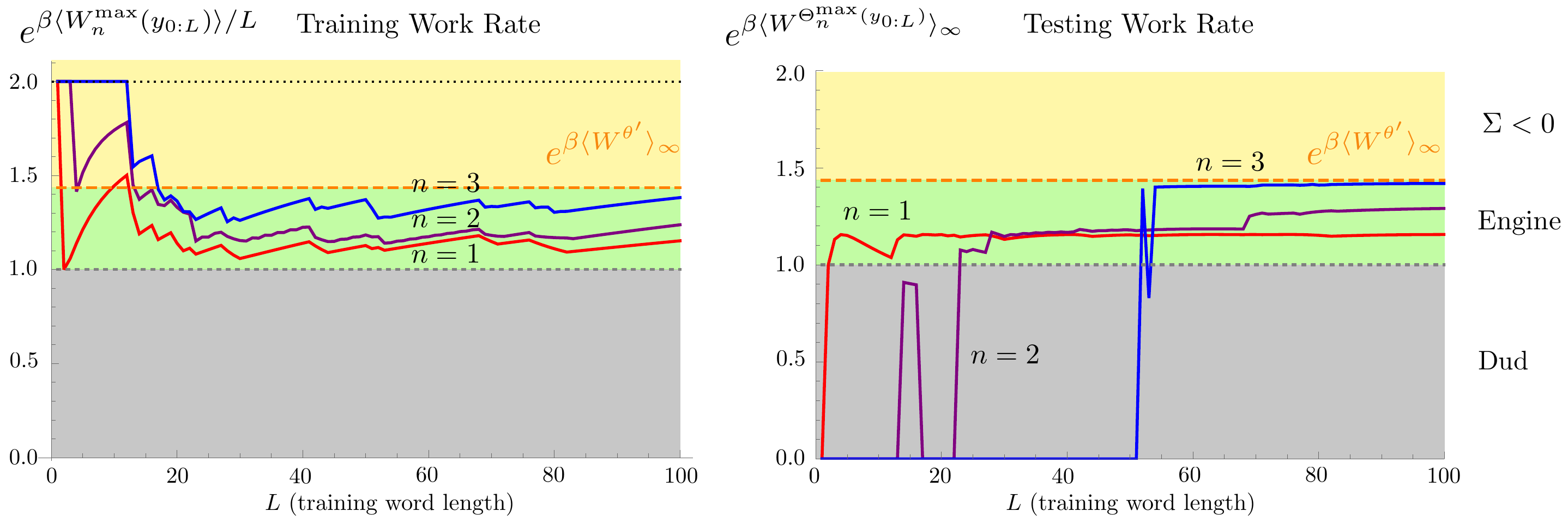}
\caption{Thermodynamic Overfitting: The work rate converges quickly on the training data (left), but this hides the divergent energy dissipation during testing (right) that corresponds to thermodynamic overfitting.  Work production from training data increases as the number of the engine's predictive states $n$ increases.  However, we see that asymptotic work production during testing is minimal ($-\infty$) for small amounts of training data, especially for more complex engines, which corresponds to overfitting.  For reference, we plot (i) the grey dashed line for zero work production $\beta \langle W^\theta \rangle_\infty=0$, (ii) the orange dashed line for work production that corresponds to correctly estimating the true model $\beta \langle W^{\theta'} \rangle_\infty$), and (iii) the black dotted line for $\beta \langle W^\theta (y_{0:L}) \rangle/L= \ln 2$ is the maximum work that can be harvested per bit from a single sequence. In the grey region below the grey dashed line, the engine is unable to harvest work on average, so it is a ``Dud''.  In the green region just above the grey dashed line, positive work is produced, effectively functioning as an ``Engine''.  Last, in the yellow region above the orange dashed line, we labeled the region with $ \Sigma  <0$ to indicate a negative entropy fluctuation.} 
\label{fig:TrainingAndValidation}
\end{figure*}

Returning to analytically finding the maximum-likelihood model and maximum-work engine with $n$ memory states, this reduces to checking every allowed topology \cite{strelioff2014bayesian}.  In Fig. \ref{fig:TrainingAndValidation} we do this exactly for $n \in \{1,2,3 \}$ for a word $y_{0:100}$, for which we train on all intermediate length strings $\{y_{0:L}\}_{L \in \{1,2, \cdots 99, 100\}}$.  The word was generated from the Five-State $\epsilon$-machine shown in Fig. \ref{fig:ThreeMachines}.  Since we are limited to models $\theta$ with three or fewer causal states and the underlying model $\theta'$ has five, we have technically misspecified our model class \cite{claeskens2008model}.

We consider the work production rate of the maximum work model $\beta \langle W^\text{max}(y_{0:L}) \rangle/L$ in Fig. \ref{fig:TrainingAndValidation}, since it determines how much energy each symbol contributes on average during training.  Taking the limit $L \rightarrow \infty$, this should approach $\langle W^\theta \rangle_\infty$---the average work extraction per bit for the engine, where $\theta$ is the engine's model.

While the next section derives an explicit expression for $\langle W^\theta \rangle_\infty$, for now we discuss the length-$L$ work production rate of the maximum-work engine. This is plotted with respect to three regions of asymptotic functionality:
\begin{enumerate}
\item Dud: An information engine that has nonpositive work production rate is a Dud, because it produces no useful energy.
\item Engine: An information engine that produces positive work is a functional engine.  
\item Negative entropy production ($ \Sigma   <0$):  $\Sigma$ denotes the total entropy production in the thermal environment and information bearing degrees of freedom combined \cite{seifert2005entropy}.  If the engine produces more work per bit than an engine that estimates a true model $\langle W^{\theta'} \rangle_\infty$, then it has extracted more energy than is available in the form of free energy per symbol. If it continues at this work production rate, the total entropy production will be negative on average and violate the Second Law of thermodynamics.
\end{enumerate}
We also plot a dashed black line at $\beta \langle W^\theta \rangle = \ln 2$, which is the Landauer benefit of randomizing a bit \cite{Land61a}.

Figure \ref{fig:TrainingAndValidation} shows that all the maximum-work information engines produce positive work and do not exceed Landauer's bound.  This is because it is always possible to find an engine that gains energy by interacting with a singular realization of the true model (the data).  We also see that early in training (for short length strings $L< \sim 20$) the training work rate often achieves the $k_BT \ln 2$ Landauer limit on work production.  This corresponds to discovering a predictive model that deterministically produces the training sequence with unit probability \cite{vieira2022temporal}.  However, this is well into the regime of asymptotic Second Law violation if the true source is anything other than a deterministic repetition of the training string, meaning that entropy production for these instances is negative.  This is possible since the data is a single realization: negative entropy fluctuations are allowed as long as the detailed fluctuation theorem is satisfied and entropy is nondecreasing \emph{on average} \cite{Croo99a}.

As the training length increases in Fig. \ref{fig:TrainingAndValidation}, a general trend appears: The training work-rate slowly ramps up, interspersed by sudden sharp dips.  These dips happen at roughly the same point for all three curves ($n=1$, $n=2$, and $n=3$).  
The slow ramps of increasing work rate correspond to the slow accumulation of low-suprisal symbols, where we can evaluate the surprisal of the next symbol $-\ln \Pr(Y^\theta_L=y_L|Y^\theta_{0:L}=y_{0:L})$ via the probability conditioned on the past sequence. The ensuing dip corresponds to breaking the sequence with a rare high-surprisal element, significantly increasing the estimated input entropy rate and reducing the work rate.  

Figure \ref{fig:TrainingAndValidation} also shows that memory size significantly affects work production. There is a consistent thermodynamic advantage to additional memory in training.  Work production increases as memory increases $\langle W_{n+1}^\text{max}(y_{0:L}) \rangle \geq \langle W_n^\text{max}(y_{0:L}) \rangle$.  This is as expected, because every $n$-state model can be described using $n$ or more states.

The thermodynamic advantage of higher memory parallels the thermodynamic Principle of Requisite Complexity \cite{Boyd17a}---an engine's memory must match the predictive complexity of its fuel to optimally leverage all correlations.  The challenge of harvesting energy from an information source $\theta'$ closely parallels the challenge of predicting that same source, and more memory yields better prediction of temporally correlated information  \cite{jurgens2021shannon}.  The advantage of larger memory is especially pronounced for short training lengths, where we see that 3-state machines can produce work at the limit of the Second Law for far longer training lengths.

One might infer from the advantage of additional memory in training that more memory is always thermodynamically advantageous.  However, we will now explore a more nuanced picture of the costs and benefits of engine complexity by analyzing the effectiveness of the engine harvesting energy from further inputs from the information source.

\section{Asymptotic Work Harvesting and Overfitting}

After training to identify the maximum-work information engine, we examine its thermodynamic performance on test data as shown in Fig. \ref{fig:ThermodynamicTraining}. The purpose of training is to find an engine whose internal model approximates the true model $\theta'$ that produced the training data.  TML has found a ``good model'' if the resulting engine is able to effectively produce work when it receives further inputs from the true source.  Thus, we consider the average work that would be produced in the testing stage.

If the $\epsilon$-machine of the true input process is described by:
\begin{align*}
    \theta'=\{\mathcal{S}',\mathcal{Y},s^{*'},\{\theta'(y|s')\}_{y \in \mathcal{Y}, s' \in \mathcal{S}'},\{\epsilon'(y,s')\}_{y \in \mathcal{Y}, s' \in \mathcal{S}'}\},
\end{align*}
then an efficient information engine with internal model:
\begin{align*}
    \theta=\{\mathcal{S},\mathcal{Y},s^{*},\{\theta(y|s)\}_{y \in \mathcal{Y}, s \in \mathcal{S}},\{\epsilon(y,s)\}_{y \in \mathcal{Y}, s \in \mathcal{S}}\},
\end{align*}
will on average produce work over $L$ time steps:
\begin{align}
\begin{split}
    \beta \langle W^\theta \rangle_{0:L} =& \sum_{y_{0:L}} \Pr(Y^{\theta'}=y_{0:L}) \ln \Pr(Y^{\theta}=y_{0:L})\\&+L \ln |\mathcal{Y}|.
    \end{split}
\end{align}
$Y^{\theta'}_{0:L}$ is the random variable of the actual input process over length $L$.  The average work produced during the $L$th time step is:
\begin{align}
    \beta \langle W^\theta \rangle_{L} \equiv \beta \langle W^\theta \rangle_{0:L+1}-\beta \langle W^\theta \rangle_{0:L}.
\end{align}
We consider the asymptotic rate of work production:
\begin{align}
\langle W^\theta \rangle_\infty \equiv \lim_{L \rightarrow \infty}  \langle W^\theta \rangle_{L},
\end{align}
as the measure of an engine's effectiveness.

\begin{The}
\label{thm:WorkRate}
The asymptotic work rate for an efficient engine with internal predictive model $\theta$ when harvesting information from a source with predictive model $\theta'$ is:
\begin{align}
\beta \langle W^\theta \rangle_\infty =\ln |\mathcal{Y}| + \sum_{s,s',y} \pi_{s,s'} \theta'(y|s')  \ln \theta(y|s).
\end{align}
Here, $\pi_{s,s'}$ is the steady-state of the joint hidden states $\mathcal{S} \otimes \mathcal{S}'$ if the causal update of $\theta$ is driven by $\theta'$:
\begin{align}
\pi_{s_1,s_1'} = \sum_{s_0,s_0',y} \delta_{s_1,\epsilon(s_0,y)}  \delta_{s'_1,\epsilon'(s'_0,y)}  \theta' (y| s'_0) \pi_{s_0,s'_0}.
\end{align}
\end{The}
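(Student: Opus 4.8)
The plan is to start from the exact finite-$L$ expression for the work already available in the excerpt,
\begin{align}
\beta \langle W^\theta \rangle_{0:L} = \sum_{y_{0:L}} \Pr(Y^{\theta'}_{0:L}=y_{0:L}) \ln \Pr(Y^{\theta}_{0:L}=y_{0:L}) + L \ln |\mathcal{Y}|,
\label{eq:plan-start}
\end{align}
and compute the per-step increment $\beta \langle W^\theta \rangle_L = \beta\langle W^\theta\rangle_{0:L+1} - \beta\langle W^\theta\rangle_{0:L}$, then take $L\to\infty$. Since $\theta$ is an $\epsilon$-machine, it is unifilar, so $\ln \Pr(Y^\theta_{0:L+1}=y_{0:L+1}) = \ln\Pr(Y^\theta_{0:L}=y_{0:L}) + \ln\theta(y_L \mid \epsilon(y_{0:L},s^*))$; the constant-volume term contributes an extra $\ln|\mathcal{Y}|$. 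Hence the increment telescopes into
\begin{align}
\beta \langle W^\theta \rangle_L = \ln|\mathcal{Y}| + \sum_{y_{0:L+1}} \Pr(Y^{\theta'}_{0:L+1}=y_{0:L+1}) \, \ln \theta\bigl(y_L \mid \epsilon(y_{0:L},s^*)\bigr).
\label{eq:plan-increment}
\end{align}

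Next I would reduce the sum in \eqref{eq:plan-increment} to an expectation over the joint chain. The key observation is that $\epsilon(y_{0:L},s^*)$ is a deterministic function of the past, so conditioning on the $\theta$-state $s$ reached after reading $y_{0:L}$, the relevant data are (i) the probability that the true process $\theta'$ has emitted a prefix driving $\theta$ into state $s$ and itself into state $s'$, and (ii) the conditional law $\theta'(y_L\mid s')$ of the next symbol given the true state $s'$. Group the sum over $y_{0:L+1}$ accordingly: write $y_{0:L+1} = (y_{0:L}, y_L)$, sum $y_{0:L}$ over all prefixes with $\epsilon(y_{0:L},s^*)=s$ and $\epsilon'(y_{0:L},s^{*\prime})=s'$, and use $\Pr(Y^{\theta'}_{0:L+1}=y_{0:L}y_L) = \Pr(Y^{\theta'}_{0:L}=y_{0:L})\,\theta'(y_L\mid s')$ (unifilarity of $\theta'$). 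This yields
\begin{align}
\beta \langle W^\theta \rangle_L = \ln|\mathcal{Y}| + \sum_{s,s',y} \pi^{(L)}_{s,s'}\, \theta'(y\mid s')\, \ln\theta(y\mid s),
\label{eq:plan-finiteL}
\end{align}
where $\pi^{(L)}_{s,s'} \equiv \Pr\bigl(\epsilon(Y^{\theta'}_{0:L},s^*)=s,\ \epsilon'(Y^{\theta'}_{0:L},s^{*\prime})=s'\bigr)$ is exactly the distribution of the joint $\mathcal{S}\otimes\mathcal{S}'$ chain after $L$ steps under the stated transition rule.

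Finally I would take $L\to\infty$: the joint chain $\pi^{(L)}$ evolves by the linear map displayed in the theorem, so $\pi^{(L)}\to\pi$, its stationary distribution, and substituting into \eqref{eq:plan-finiteL} gives the claimed formula. The main obstacle is the convergence step: $\pi^{(L)}\to\pi$ requires the joint chain to have a unique stationary distribution, which can fail if the chain is periodic or reducible. I would handle this by restricting to the recurrent communicating class reached from the start states (the transient part washes out in the $L\to\infty$ average) and, in the periodic case, interpreting the limit as a Cesàro average — which is the natural reading of $\langle W^\theta\rangle_\infty$ as an asymptotic \emph{rate} anyway, and leaves the bilinear expression in $\pi$, $\theta'$, $\ln\theta$ unchanged. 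A secondary point worth a line is the potential appearance of $\ln\theta(y\mid s)=-\infty$ when $\theta$ assigns zero probability to a transition that $\theta'$ actually takes with positive weight under $\pi$; this is precisely the divergent-dissipation (overfitting) phenomenon flagged in Fig.~\ref{fig:TrainingAndValidation}, and the formula correctly reports $\beta\langle W^\theta\rangle_\infty = -\infty$ there, so no separate argument is needed beyond noting the convention.
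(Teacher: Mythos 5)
Your proof is correct and follows essentially the same route as the paper's Appendix~\ref{app:Derivation of Asymptotic Work Rate}: compute the per-step increment $\beta\langle W^\theta\rangle_{0:L+1} - \beta\langle W^\theta\rangle_{0:L}$, telescope via unifilarity of $\theta$ to isolate $\ln\theta(y_L\mid\epsilon(y_{0:L},s^*))$, regroup the prefix sum by the joint state $(s,s')$ using unifilarity of $\theta'$, and pass to the stationary distribution of the joint chain. Your remarks on Ces\`aro-averaging for periodic chains and on the $\ln\theta(y\mid s)=-\infty$ case are slightly more careful than the paper, which simply asserts the $L\to\infty$ limit, but the argument is the same.
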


\begin{proof}
See Appendix \ref{app:Derivation of Asymptotic Work Rate}.
\end{proof}

Using the expression in Thm. \ref{thm:WorkRate}, we can calculate the rate of entropy production (dissipated work) \cite{Parr15a} by comparing the work rate to the rate of nonequilibrium free energy change:
\begin{align}
    \langle \Sigma^\theta \rangle_\infty/k_B =\beta (-\langle W^\theta \rangle_\infty-\Delta F^\text{NEQ}_\infty).
\end{align}
This is the amount of free energy wasted per symbol, which quantifies the irreversibility of the information engine.  As shown in App. \ref{app: Entropy Production as Divergence}, the entropy production can be reduced to the average relative entropy between the next-input prediction of the true model's $\theta'$ hidden state $s'$ and the prediction of the estimated model's $\theta$ hidden state $s$:
\begin{align}
    \langle \Sigma^\theta \rangle_\infty/k_B  = \sum_{s,s'} \pi_{s,s'}  D_{KL} (Y^{\theta'}_i|S'_i=s'||Y^{\theta}_i|S_i=s),
\end{align}
where:
\begin{align*}
D_{KL}(X'|Z'=z'||X|Z=z) \equiv & \\
\sum_{x}\Pr(X'=x|Z=z') & \ln \frac{\Pr(X'=x|Z'=z')}{\Pr(X=x|Z=z)}
\end{align*}
denotes the relative entropy between the distribution on $X'$ induced by the condition that $Z'$ realizes $z'$ and the distribution on $X$ induced by the condition that $Z$ realizes the element $z$. Such relative entropies appear as the additional dissipation incurred by misestimating the input distribution \cite{Kolc17a, Riec20a}.  If a learning process refines and improves the estimator $\theta$, its divergence from the actual process $Y^{\theta'}_{0:L}$ should diminish, monitoring how much learning reduces entropy production \cite{milburn2023quantum, gold2019self}.

Figure \ref{fig:TrainingAndValidation} plots the asymptotic work rate for the maximum-work models $\{\Theta^\text{max}_n(y_{0:L})\}_{L \in \{1,2,\cdots ,99, 100 \}}$ that result from training on the words $\{y_{0:L}\}_{L \in \{1,2,\cdots ,99, 100 \}}$ appearing in $y_{0:100}$ which was randomly generated by the Five-State model shown in Fig. \ref{fig:ThreeMachines}. The result is compared to the engine's maximum possible work rate with the true model $\langle W^{\theta'} \rangle_\infty$. The difference between these work rates gives the entropy production rate.

Again, we decompose the regions of functionality into Dud (negative work production), Engine (positive work production), and $\Sigma < 0$ (Second Law violation).  We see that all learned engines respect the Second Law and extract less work than if they had estimated the true model $\theta'$.  The upper bound for engines is the work rate that results from guessing the true model, as this is the difference in entropy rates between inputs and outputs. According to the Information Processing Second Law of thermodynamics, this entropy difference is the accessible free energy per symbol and bounds the work production \cite{Boyd15a}.

Figure \ref{fig:TrainingAndValidation} also shows a thermodynamic advantage in the testing work rate for engines with larger memories, if trained on long words.  While it is unclear how close to the optimal $n$-state engine these results are, we see that thermodynamic learning discovers enough of the hidden temporal structure to harvest much of the available free energy. In fact, the three-state information engine nearly achieves the thermodynamic limit of perfect efficiency for training length $L= \sim 100$.  Rate-distortion theory \cite{Stil07b, creutzig2009past, marzen2016predictive} provides a prescription for the most predictive model given a limitation on available memory. 

This noted, training engines with additional memory produces less effective engines for short training words.  In fact, the maximum-work three-state engine produces $-\infty$ work in testing for training lengths up to $L \approx 50$.  Divergent dissipated work in Fig. \ref{fig:TrainingAndValidation} corresponds to \emph{overfitting}, where the number of available model parameters is much larger than can be reasonably deduced from frequentist estimates from the data.  Each memory state must make a prediction of its input. For short training words and larger memories, it becomes more probable that one of the memory states will not receive any copies of one of the input symbols $N(y,s|s^*,\epsilon,y_{0:L})=0$.  In this case, the engine estimates $\theta(y|s)=0$, which comes at the cost of negative divergent work production if such an observation is actually possible for the input process $\theta'$ \cite{Kolc17a, Riec20a}.  The size of the available parameter space within even $3$-state models is large enough that it realizes the main feature of overfitting for this particular training word---good performance on training data, while failing to effectively predict and harvest energy from test data.  There is an asymptotic benefit to memory, but there are also dire costs to using an excessively complex engine when training data does not justify it.

In this way, we identified thermodynamic overfitting,  showing that it is a considerable hurdle for TML. We now turn to resolve this challenge.

\section{Thermodynamic Generalization and Regularization}

Overfitting is commonly encountered in machine learning, as models with many degrees of freedom can encode a dataset explicitly in model parameters  \cite{hastie2009elements}.  To circumvent it, it is commonplace to implement regularization techniques that allow models to generalize and better predict unseen data. These strategies include restrictions that limit the model complexity that can be discovered through training. See, for example, ``dropout'' in deep neural networks \cite{srivastava2014dropout} and regularizers that add an explicit penalty to model complexity to modify the performance measure \cite{tibshirani1996regression, zhang2010regularization}. 

The following considers two physically-motivated methods of \emph{thermodynamic regularization}:
\begin{enumerate}
    \item \emph{Autocorrection}: An engine cannot start synchronized with the true predictive state of the input. Through the influence of the input symbols on the engine's predictive state dynamics, the engine must autocorrect in order to synchronize its estimated predictive state. Not surprisingly, there is a transient energy cost as the engine autocorrects its predictive state and approaches its steady-state dynamics \cite{Boyd16c}.
    \item \emph{Engine Initialization}: There is an energy cost to initializing the energy landscape of a predictive information engine.
\end{enumerate}
We will now show that autocorrection incurs an additional cost to complex models with unnecessarily distinct predictions from each memory state. In addition, we find that the cost of engine initialization leads to Bayesian updates of the edge-weights according to Laplace's rule of succession \cite{jaynes2003probability}.

Let $C(\theta)$ denote the energy penalty of initializing an engine with model $\theta$. It functions as a \emph{regularizer}.  If we also include the cost of autocorrecting from an initial distribution $p(s_0)\equiv \Pr(S_0=s_0)$, as shown in App. \ref{app:Maximum Work Edge Weights}, the average work production can be expressed:
\begin{align*}
    \beta \langle W^\theta_G (y_{0:L}) \rangle & = L \ln |\mathcal{Y}| -C(\theta) 
    \\ &+ \sum_{s_0}p(s_0) \ln \Pr(Y^\theta_{0:L}=y_{0:L}|S_0=s_0) ,
\end{align*}
where $\Pr(Y^\theta_{0:L}=y_{0:L}|S_0=s_0)$ is the probability of model $\theta$ producing $y_{0:L}$ when starting from predictive state $s_0$.  As with the case of unregularized TML, we can re-express the work production by tracking the state dynamics induced within the predictive states by the input word.  Again, we need only count the number of times $N(y,s|s_0,\epsilon,y_{0:L})$ that $y$ is input to predictive state $s$ for every the initial state $s_0$, then we obtain the work production:
\begin{align}
    \beta \langle W^\theta_G (y_{0:L}) \rangle & = L \ln |\mathcal{Y}| -C(\theta) 
    \\ &+ \sum_{s_0,s,y}p(s_0)  N(y,s|s_0,\epsilon,y_{0:L})\ln \theta(y|s). \nonumber
\end{align}

The penalty $C(\theta)$ for model $\theta$ incurred through training is motivated by physical constraints, such as the energetic cost of initializing the model.  If TML evaluates a model that is too complex and costly to initialize, then that cost should counteract the energetic benefit of predicting the training word.  This echoes the \emph{minimum description length principle}, in which the benefit gained through prediction is offset by the cost of describing the model \cite{grunwald2007minimum}. In this spirit, we introduce an energy penalty of preparing the edge-weights $\theta(y|s)$ of a particular model $\theta$.

The penalty originates from the fact that equilibrium probabilities $\pi$ are directly related to energies via:
\begin{align*}
\beta E(z) = \beta F^\text{eq}-\ln \pi(z).
\end{align*}
The equilibrium free energy $-\beta F^\text{eq} \equiv \ln \sum_{z} e^{-\beta E(z)}$ is the upper limit on work that can be produced from an equilibrium distribution.  The equilibrium distribution is the necessary starting point for an efficient quasistatic transformation of information \cite{Jun14a, garner2017thermodynamics, Boyd17a}.  This means that if we prepare a distribution $q(z)$ that differs from the initial equilibrium distribution, we can calculate the dissipated work $\langle W_\text{diss} \rangle$ as it partially relaxes to $q'(z)$ via a difference in relative entropies \cite{Parr15a, Riec20a, Kolc17a}:
\begin{align*}
    \beta \langle W_\text{diss} \rangle =D_{KL}(q||\pi)-D_{KL}(q'|\pi).
\end{align*}
The dissipation associated with preparing every edge-weight of the model $\theta$ should contribute to a Thermodynamic Regularizer.

As discussed in App. \ref{app:Maximum Work Edge Weights}, preparing the edge-weight $\theta(y|s)$ of a particular combination of predictive state $s$ and input $y$ incurs the dissipated work of:
\begin{align*}
    \beta \langle W^\text{prepare}_\text{diss}(s,y) \rangle=-\ln \theta(y|s).
\end{align*}
We propose that the cost of implementing a model is proportional to the dissipation associated with preparing every edge-weight:
\begin{align*}
    C(\theta) & = \alpha \sum_{s,y} \beta \langle W^\text{prepare}_\text{diss} \rangle
    \\ & = -\alpha \sum_{s,y} \ln \theta(y|s),
\end{align*}
where $\alpha$ is a regularization parameter of our choosing.  For the parameter $\alpha=1$, this is the total energetic excess beyond the free energy for each combination engine memory state and input:
\begin{align*}
    C(\theta)= \sum_{s,y}\beta(E^\theta(s,y)-F^\theta(s)).
\end{align*}
We associate this additional cost with examining the relaxation to check the estimated probability of every edge-weight. This is a \emph{thermodynamic regularizer} in that the penalty originates from a thermodynamic implementation of the engine with model $\theta$.

The resulting regularized work production is
\begin{align}
    \beta \langle W^\theta_{p,\alpha,\epsilon} (y_{0:L}) \rangle & = L \ln |\mathcal{Y}| -\alpha  \langle W^\text{prepare}_\text{diss} \rangle
    \\ &+ \sum_{s_0}p(s_0)  N(y,s|s_0,\epsilon,y_{0:L})\ln \theta(y|s). \nonumber
\end{align}
This work production can be maximized analytically.

\begin{The}
\label{Thm:edgeweights}
The maximum-work edge-weights of an engine with input $y_{0:L}$, topology $\epsilon$, and regularization parameters $\alpha$ and $p$ can be analytically calculated by tracking the dynamics of the $\epsilon$-map from each start state:
\begin{align*}
\Theta^\text{max}_{p,\alpha,\epsilon,y_{0:L}}(y|s) & = \frac{\sum_{s_0} p(s_0)(\alpha+N(y,s|s_0,\epsilon,y_{0:L}))}{\sum_{y',s_0} p(s_0)(\alpha+N(y',s|s_0,\epsilon,y_{0:L}))}.
\end{align*}
\end{The}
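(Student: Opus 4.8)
The plan is to maximize the regularized work production
\begin{align*}
\beta \langle W^\theta_{p,\alpha,\epsilon} (y_{0:L}) \rangle & = L \ln |\mathcal{Y}| -\alpha \sum_{s,y}\beta \langle W^\text{prepare}_\text{diss}(s,y) \rangle \\ & + \sum_{s_0,s,y}p(s_0)  N(y,s|s_0,\epsilon,y_{0:L})\ln \theta(y|s)
\end{align*}
over the edge-weights $\{\theta(y|s)\}$ for a fixed topology $\epsilon$, fixed initial distribution $p$, and fixed regularization strength $\alpha$. Substituting the prepared-edge dissipation $\beta \langle W^\text{prepare}_\text{diss}(s,y) \rangle = -\ln \theta(y|s)$ from the discussion above, the objective becomes, up to the $\theta$-independent constant $L\ln|\mathcal{Y}|$,
\begin{align*}
\mathcal{O}(\theta) = \sum_{s,y}\left(\alpha + \sum_{s_0}p(s_0) N(y,s|s_0,\epsilon,y_{0:L})\right)\ln \theta(y|s).
\end{align*}
Crucially, this splits into a sum of independent terms, one for each predictive state $s$: the edge-weights $\{\theta(y|s)\}_{y\in\mathcal{Y}}$ for different $s$ are coupled only through the separate normalization constraints $\sum_{y}\theta(y|s)=1$. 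So the maximization decouples into $|\mathcal{S}|$ independent problems.

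First I would define, for each state $s$, the effective count $M(y,s) \equiv \alpha + \sum_{s_0}p(s_0) N(y,s|s_0,\epsilon,y_{0:L})$, so that the per-state objective is $\sum_{y} M(y,s)\ln\theta(y|s)$ subject to $\sum_y \theta(y|s)=1$ and $\theta(y|s)\ge 0$. Next I would apply the method of Lagrange multipliers exactly as in the unregularized derivation in App.~\ref{app:Maximum Work Edge Weights}: introduce a multiplier $\mu_s$ for the constraint, set $\partial/\partial\theta(y|s)\big[\sum_{y'}M(y',s)\ln\theta(y'|s) - \mu_s(\sum_{y'}\theta(y'|s)-1)\big]=0$, which gives $M(y,s)/\theta(y|s)=\mu_s$, hence $\theta(y|s)=M(y,s)/\mu_s$. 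Then imposing normalization fixes $\mu_s = \sum_{y'}M(y',s)$, yielding
\begin{align*}
\Theta^\text{max}_{p,\alpha,\epsilon,y_{0:L}}(y|s) = \frac{M(y,s)}{\sum_{y'}M(y',s)} = \frac{\sum_{s_0}p(s_0)(\alpha+N(y,s|s_0,\epsilon,y_{0:L}))}{\sum_{y',s_0}p(s_0)(\alpha+N(y',s|s_0,\epsilon,y_{0:L}))},
\end{align*}
using $\sum_{s_0}p(s_0)=1$ to absorb $\alpha$ into the numerator in the stated form. Finally I would note that this critical point is the global maximum on the probability simplex: the objective $\sum_y M(y,s)\ln\theta(y|s)$ is strictly concave in $\{\theta(y|s)\}$ whenever all $M(y,s)>0$ (which holds for any $\alpha>0$), and it diverges to $-\infty$ at the simplex boundary where some $\theta(y|s)\to 0$ with $M(y,s)>0$, so the interior critical point is the unique maximizer; the edge case $\alpha=0$ recovers Thm.~\ref{Thm:edgeweights}'s unregularized predecessor by continuity.

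The main obstacle is not the Lagrange-multiplier calculation itself, which is routine, but rather justifying the decoupling across states cleanly and handling the parameter-counting subtleties: one must verify that $N(y,s|s_0,\epsilon,y_{0:L})$ as defined genuinely only depends on $\theta$ through the topology $\epsilon$ (so that it is a constant coefficient in the optimization over edge-weights, not itself a function of $\theta$) --- this is exactly where unifilarity is used, since the state reached after reading $y_{0:i}$ from $s_0$ is determined purely by $\epsilon$. A secondary subtlety worth a sentence is the treatment of ``forbidden'' transitions (those with $\theta(y|s)=0$ in the chosen topology, as with $\theta(1|B)$ in the Even Process of Fig.~\ref{fig:ModelParameters}): for such $(y,s)$ pairs one excludes $y$ from the sum over the alphabet at state $s$ before optimizing, or equivalently notes that the regularizer term $-\alpha\ln\theta(y|s)$ is only summed over allowed edges; I would flag this but not belabor it, since it parallels the unregularized case already established in the excerpt.
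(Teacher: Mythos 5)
Your proof is correct and follows essentially the same route as the paper's: both reduce the regularized work production to the log-linear objective $\sum_{s,y} M(y,s)\ln\theta(y|s)$ with effective counts $M(y,s)=\alpha+\sum_{s_0}p(s_0)\,N(y,s|s_0,\epsilon,y_{0:L})$ and optimize per-state via Lagrange multipliers on the normalization constraint, using $\sum_{s_0}p(s_0)=1$ to put the result in the stated form. The paper first derives a recursive stationarity condition valid for a general regularizer $C(\theta)$ and only then specializes to $C(\theta)=-\alpha\sum_{s,y}\ln\theta(y|s)$, whereas you substitute the specific cost up front, which is marginally more direct but equivalent. Your added observations --- the explicit decoupling across states, the strict-concavity/boundary argument certifying that the interior critical point is the global maximizer, and the remark that unifilarity is what makes $N(\cdot)$ a $\theta$-independent coefficient --- are all correct and usefully fill justifications that the paper's appendix leaves implicit.
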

\begin{proof}
See App. \ref{app:Maximum Work Edge Weights}
\end{proof}

Theorem \ref{Thm:edgeweights} gives a shortcut to regularized TML.  Once given the topology of a candidate information engine, we need only track the memory dynamics and directly calculate the maximum-work edge-weights from $\Theta^\text{max}_{p,\alpha,\epsilon,y_{0:L}}(y|s)$. Thus, given a memory constraint, we scan through the available engine topologies and select that which produces the most work.  While the set of topologies grows super-exponentially with memory, it is vastly simpler than discovering the edge-weights through numerical optimization of edge-weights.

\subsection{Autocorrection}

When harvesting temporal correlations from a time series, the engine may start out of sync with the input process. As a result, it must \emph{autocorrect} to the predictive states input. Autocorrection requires additional energy \cite{Boyd16c}, as does synchronizing with the inputs \cite{Boyd16e}. To address this, consider the average work production if the engine starts in distribution $p(s)\equiv \Pr(X^\theta_0=s)$ over its memory states. Appendix \ref{app:Work Production From Distributed Start State} shows that the resulting average work production is the weighted log-likelihood of each predictive state:
\begin{align*}
& \beta \langle W_\text{AC}^\theta(y_{0:L}) \rangle 
\\ & = L \ln |\mathcal{Y}| +\sum_{s} p(s) \ln \Pr(Y^\theta_{0:L}=y_{0:L}|S_0^\theta=s) \nonumber
\\ & = L \ln |\mathcal{Y}| +\sum_{s} p(s) \ln \prod_{i=0}^{L-1} \theta(y_i|\epsilon(y_{0:i},s)). \nonumber
\end{align*}

As an example, consider the case where the initial distribution over memory states is uniform $p(s)=1/|\mathcal{X}|$.  Rather than finding the $n$-state model that maximizes the work production $\Theta^\text{max}_n(y_{0:N})$, we find the model that produces maximum work when we take into account the cost of autocorrection:
\begin{align}
    \Theta^\text{AC}_n (y_{0:L}) \equiv \underset{\theta \in \Theta_n}{\text{argmax}} \langle  W_{AC}^\theta(y_{0:L}) \rangle.
\end{align}

This sets us up to introduce a thermodynamic complexity measure for autocorrection. Note that requiring the agent to autocorrect introduces unavoidable energy inefficiencies that are not present in the unregularized MLE/TML strategy.  Given an initial density over the input's predictive states $p(s_0)=\Pr(S'_0=s_0)$, we have the true input distribution:
\begin{align*}
\Pr(Y^{\theta'}_{0:L})=\sum_{s_0}p(s_0)\Pr(Y^{\theta'}_{0:L}|S_0=s_0).
\end{align*}
The average entropy production is the difference between the average work production and the change in free energy:
\begin{align*}
    \langle \Sigma^\theta_\text{AC} \rangle_{0:L}/k_B & = \beta (-\langle W^\theta_\text{AC} \rangle_{0:L}-\Delta F^\text{NEQ}_{0:L})  
    \\ & = \sum_{s_0} p(s_0) D_{KL}(Y^{\theta'}_{0:L}||Y^\theta_{0:L}|S_0=s_0). \nonumber
\end{align*}
The average dissipation is proportional to the average divergence between the actual input and the estimated input from each predictive state of the estimated model $\theta$.  

Even if the agent correctly guesses the underlying $\epsilon$-machine model, such that the state transitions $\epsilon'=\epsilon$ and the edge-weights $\theta'(y|s)=\theta(y|s)$ are the same and the prediction from each causal state is the same:
\begin{align}
\Pr(Y^{\theta'}_{0:L}|S'_0=s_0)=\Pr(Y^{\theta}_{0:L}|S_0=s_0),
\end{align}
the agent can still produce entropy due to the lack of model synchronisation.  

The entropy produced through autocorrection is a new model complexity measure.  Specifically, the complexity is proportional to the cost of autocorrecting to the correct predictive states of the process from an initially uniform distribution over all predictive states:
\begin{align*}
\mathcal{C}_\text{AC}(\theta,L) \equiv \frac{\sum_{s_0}  D_{KL}(Y^{\theta'}_{0:L}||Y^{\theta'}_{0:L}|S'_0=s_0)}{|\mathcal{S}|}.
\end{align*}
Previous explorations of correlation-powered information engines show that they must autocorrect to synchronize with their inputs, incurring a thermodynamic cost along the way \cite{Boyd16c}.

The autocorrection cost $\mathcal{C}_\text{AC}(\theta,L)$ differs from a measure of stored information like statistical complexity $C_\mu$ \cite{Crut01a}. On the one hand, we can have many different possible configurations with the latter, with nearly identical predictions, but the difference between predictions of each causal state is immaterial to the complexity measure.  On the other hand, the complexity measure $\mathcal{C}_\text{AC}(\theta,L)$ reflects the thermodynamic cost of choosing a model that anticipates wildly divergent futures from different predictive states.  

There are cases in which we know this will approach a finite quantity in the asymptotic limit $L \rightarrow \infty$. (For instance, when $Y^{\theta'}_{0:\infty}$ has finite Markov order.)  In such cases, we can refine the complexity measure to be the thermodynamic cost of exactly synchronizing to the input process:
\begin{align*}
\mathcal{C}_\text{AC}(\theta) & \equiv  \lim_{L \rightarrow \infty} \mathcal{C}_\text{AC}(\theta,L)
\\ & =\frac{\sum_{s_0}  D_{KL}(Y^{\theta'}_{0:\infty}||Y^{\theta'}_{0:\infty}|S'_0=s_0)}{|\mathcal{S}|}. \nonumber
\end{align*}
We define $\mathcal{C}_\text{AC}(\theta)$ to be the \emph{autocorrection complexity}. 

The entropy production associated with autocorrection can only be minimized to zero if the engine is restricted to start in a single predictive state. Relative entropies are all non-negative and zero if and only if $\Pr(Y^{\theta'}_{0:\infty})=\Pr(Y^\theta_{0:\infty}|S_0=s_0)$. This means that for all predictive states $s_0$ with nonzero $p(s_0)$, to minimize entropy production it must be true that $\Pr(Y^{\theta'}_{0:\infty})=\Pr(Y^\theta_{0:\infty}|S_0=s_0)$. However, by the definition of causal states, different memory states must give rise to different future predictions. Thus, zero dissipation is impossible unless the initial distribution is a unique start state $p(s_0)= \delta_{s_0,s^*}$.

\subsection{Bayesian Edge-Weights}

We introduce another regularization technique by allowing the agent's memory to start in a single start state $p(s)=\delta_{s,s^*}$, but insisting on a complexity cost that is proportional to the work dissipated by initializing every edge-weight $C(\theta)= \langle W^\text{prepare}_\text{diss} \rangle$ ($\alpha=1$).  Applying Thm. \ref{Thm:edgeweights}, the resulting maximum-work edge-weights are given by:
\begin{align*}
\label{eq:EdgeWeights}
    \Theta^\text{max}_{\delta_{s,s^*},1,\epsilon,y_{0:L}}(y|s)& = \frac{1+N(y,s|s^*,\epsilon,y_{0:L})}{|\mathcal{Y}|+\sum_{y'}N(y',s|s^*,\epsilon,y_{0:L})}.
\end{align*}
This is simply Laplace's rule of succession when $Y_t$ is a binary variable \cite{jaynes2003probability}.  It follows from considering a uniform prior over the inputs for each causal state, then using Bayes' theorem to update the distribution using input counts $N(y,s|s^*,\epsilon,y_{0:L})$ for each causal state.  The generalization of this to larger alphabets $\mathcal{Y}$ also comes from Bayesian inference applied to a prior given by a $|\mathcal{Y}|$-dimensional uniform distribution (Dirichlet distribution) over the possible edge-weights \cite{jaynes2003probability}.

\subsection{Combined Regularization}

If we choose to both incur a cost of initializing the edge-weights ($\alpha=1$) and a cost of autocorrection ($p(s)=1/|\mathcal{S}|$), then we obtain a ``combined'' regularization strategy for which the maximum-work edge-weights are:
\begin{align}
    \Theta^\text{max}_{1/|\mathcal{Y}|,1,\epsilon,y_{0:L}}(y|s)& = \frac{\sum_{s_0} (1+N(y,s|s_0,\epsilon,y_{0:L}))}{\sum_{y',s_0} (1+N(y',s|s_0,\epsilon,y_{0:L}))}.
\end{align}
as shown in App. \ref{app:Maximum Work Edge Weights}.

The following numerically compares the advantages of each regularization strategy:
\begin{enumerate}
     \setlength{\topsep}{-5pt}
      \setlength{\itemsep}{-5pt}
      \setlength{\parsep}{-5pt}
\item MLE ($\alpha=0$, $p(s)=\delta_{s,s^*}$): Unregularized, where we simply maximize the work production, which yields the same inference as Maximum Likelihood Estimation.
\item BAYES ($\alpha=1$, $p(s)=\delta_{s,s^*}$): We incur the cost of initializing edge-weights, resulting in Bayesian updates of the edge-weights.
\item AC ($\alpha=0$, $p(s)=1/|\mathcal{S}|$): We incur the cost of autocorrection, meaning that divergent predictions from different causal states must be strongly justified by data.
\item CMBD ($\alpha=1$, $p(s)=1/|\mathcal{S}|$): We combine the costs of autocorrection and initializing edge-weights.
\end{enumerate}

\section{Testing Generalization Strategies}

\begin{figure*}[tbp]
\centering
\includegraphics[width=1.7\columnwidth]{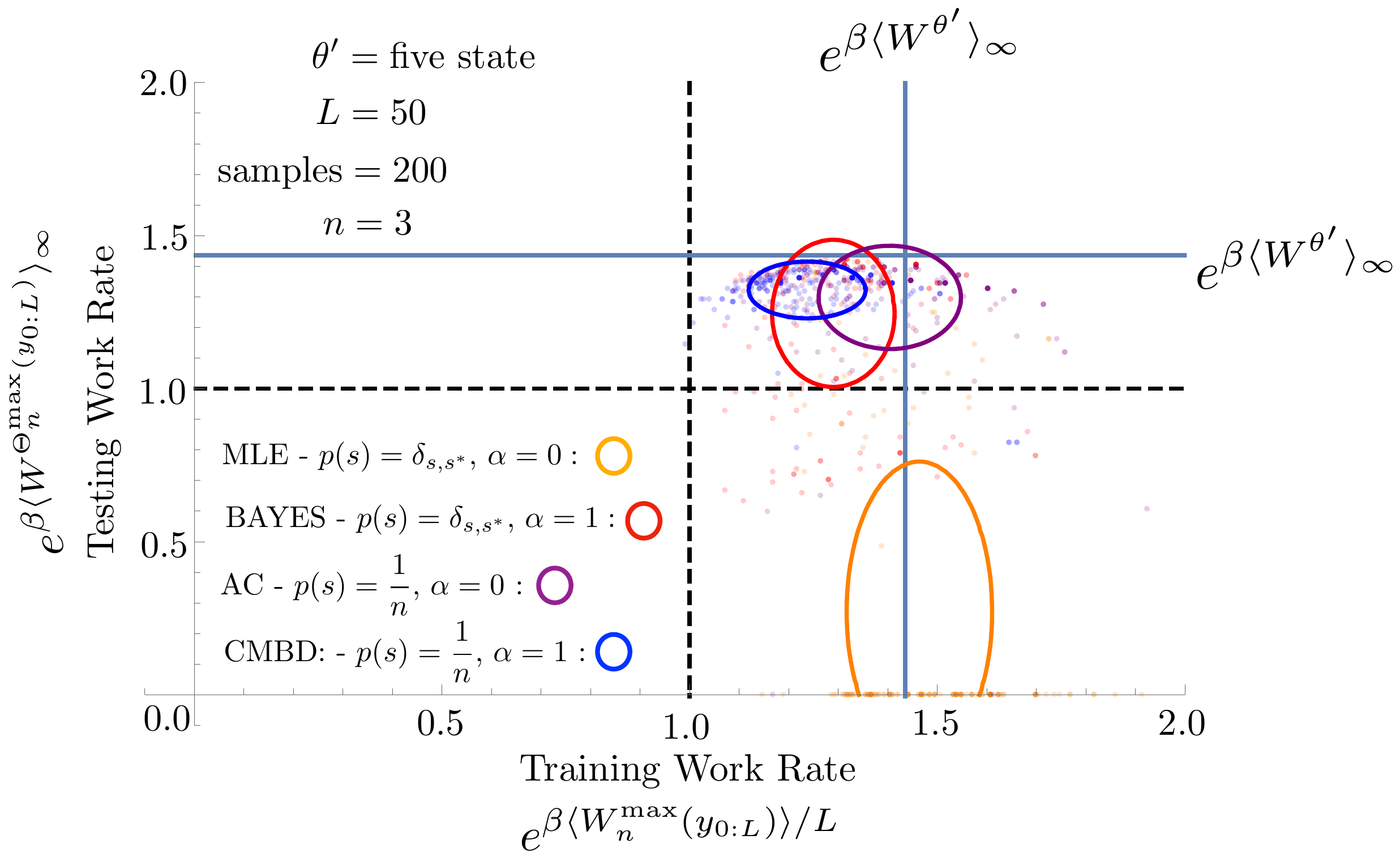}
\caption{Ensemble performance of thermodynamic regularization in training compared to testing: For the input process $\theta'$ we randomly sample $200$ length $L=50$ words and train 3-state information engines on each to find the exponential training work rate $e^{\beta \langle W^\text{max}_{n=3}(y_{0:L})\rangle/L}$ and the exponential testing work rate $e^{\beta \langle W^{\Theta^\text{max}_{n=3}(y_{0:L})}\rangle_\infty}$ for each regularization strategy: MLE (orange), BAYES (red), AC (purple), and CMBD (blue).  The ovals are centered around the average work rates of these $200$ samples, and their dimensions are given by the variance of the work rates.  The dashed black lines represent work rates of zero along each dimension, and the blue lines represent the theoretical limit on the asymptotic work rate, given by $e^{\beta \langle W^{\theta'} \rangle}$.  The strict MLE strategy performs best in training, but worst in testing.  The CMBD strategy performs worst in training and best in testing.}
\label{fig:SingleScatter}
\end{figure*}

As we attempt to determine the effectiveness of thermodynamic learning with an additional work penalty from complexity, it is worth noting that our original measure of testing performance for our model (the asymptotic work rate $\langle W^\theta \rangle_\infty$) is unchanged.  As long as the machines are ergodic, the initial distribution $p(s)$ doesn't affect the steady-state distribution $\pi_{s,s'}$ in Thm. \ref{thm:WorkRate}.  In addition, the complexity costs $C(\theta)$ are transient, and don't affect the asymptotic dynamics.  Thus, we can exactly calculate the asymptotic performance of regularized maximum-work models using Thm. \ref{thm:WorkRate}, just as before.  In the following, we evaluate the performance of the four different generalization strategies by calculating the maximum-work model
\begin{align}
    \Theta_n^\text{max}(y_{0:L})\equiv \underset{\theta \in \Theta_n}{\text{argmax}} \langle  W^\theta_G(y_{0:L}) \rangle,
\end{align}  
then evaluating the testing work rate $\langle W^{\Theta_n^\text{max}(y_{0:L})} \rangle_\infty$.

\begin{figure*}[tbp]
\centering
\includegraphics[width=2\columnwidth]{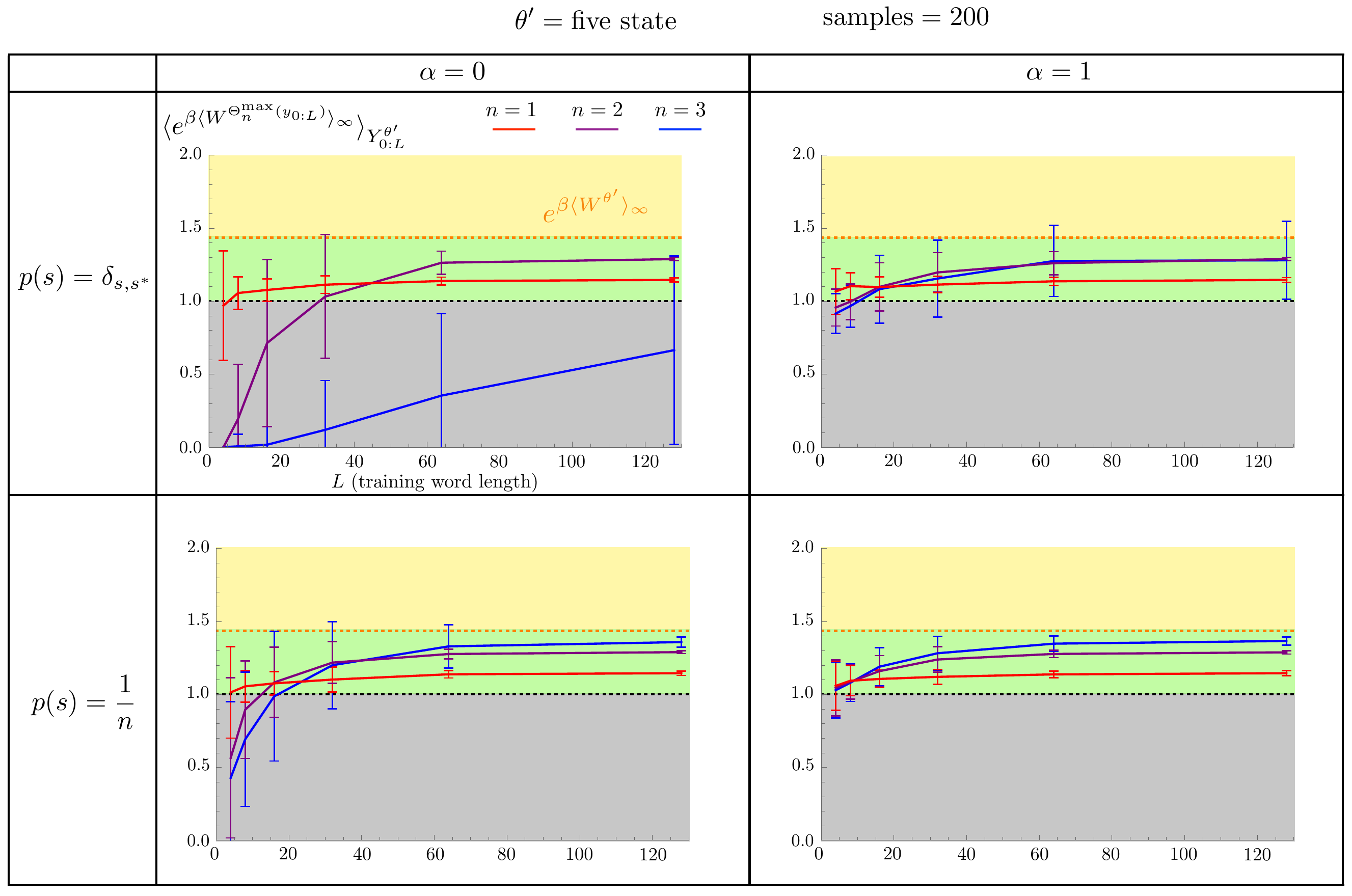}
\caption{CMBD regularization mitigates overfitting: The average and variance of the exponential asymptotic testing work rate that results from the four different learning strategies.  We generate $200$ words from the Five-State process for each length $L \in \{ 4,8,16,32,64,128\}$, then train on each using MLE, BAYES, AC, and CMBD.  The unregularized thermodynamic machine learning does an extremely poor job of discovering the underlying pattern in the word for large memories.  The BAYES method does well for small memories, but has poor performance for large memories, because of the large variance.  Using AC instead helps, but leads to divergent dissipation for small training sets.  For MLE, BAYES, and AC, there is an advantage to using a small amount of memory, for small data.  The CMBD strategy, by contrast, doesn't seem to have this suffer from using more memory, suggesting that it effectively mitigates overfitting for this case.}
\label{fig:FiveStateComparison}
\end{figure*}

\subsection{Ensemble Performance of Regularized Thermodynamic Learning}

The performance of thermodynamic learning from individual words as shown in Fig. \ref{fig:TrainingAndValidation} gives some insight into how patterns are discovered.  However, we see enough variety in random word realizations in App. \ref{app: Individual Words} that we must look at ensemble averages to determine whether Thermodynamic Regularization effectively generalizes. 

Take the process generated by the Five-State model shown in Fig. \ref{fig:ThreeMachines} as our true source.  This process requires five or more memory states for perfect prediction, while we have at most three memory states available.  Examining learning from this process elucidates the case where there is a force towards more engine complexity, but practical constraints of overfitting limit complexity.  This illustrates how thermodynamic learning performs when the accessible model class is misspecified by not including sufficient memory states to fully capture the process.  

Figure \ref{fig:SingleScatter} plots the asymptotic testing work rate against the training work rate for engines with memory size $n=3$ from $200$ different words of length $50$ generated from the Five-State model. Plotted against the theoretical limit on the work rate $\beta \langle W^{\theta'} \rangle_\infty$, we see how close the learned engines are to optimal.  Note that the work rates are exponentiated, so that the graph can accommodate infinitely divergent outcomes. We identify the behaviour of the ensemble of words by plotting an oval whose center is our numerical estimate of the average work rates for the learning process:
\begin{align*}
& (\langle x \rangle ,\langle y \rangle )
\\ & =\left(\langle e^{\beta \langle W^\text{max}_{n=3}(y_{0:L})\rangle/L} \rangle_{Y^{\theta'}_{0:L}}, \langle e^{\beta \langle W^{\Theta^\text{max}_{n=3}(y_{0:L})}\rangle_\infty} \rangle_{Y^{\theta'}_{0:L}}\right), \nonumber
\end{align*}
and whose radial dimensions are given by the variance of the exponential work rates:
\begin{widetext}
\begin{align*}
 (\text{var}( x )  ,\text{var}( y ) )=
  \left(\text{var}\left( e^{\beta \langle W^\text{max}_{n=3}(y_{0:L})/L\rangle} \right)_{Y^{\theta'}_{0:L}}, \text{var}\left( e^{\beta \langle W^{\Theta^\text{max}_{n=3}(y_{0:L})}\rangle_\infty}  \right)_{Y^{\theta'}_{0:L}}\right). 
\end{align*}
\end{widetext}
This uses the notation:
\begin{align*}
\langle f(x) \rangle_X &  \equiv \sum_{x} \Pr(X=x) f(x)
\\  \text{var}(f(x))_X & \equiv \sum_x \Pr(X=x) (f(x)-\langle f(x) \rangle)^2.
\end{align*}

To interpret Fig. \ref{fig:SingleScatter}, note that it plots the average and variance of the exponential work $e^W$ to accommodate cases of extreme overfitting, with infinite dissipation and $-\infty$ work.  If any elements of our training sample overfit in this way, then the resulting estimated average $\langle W \rangle$ and variance $\text{var}(W)$ of the work would diverge. For the case where $\alpha=0$, this is always a possibility, since a given training word may not realize a transition that is allowed by the input process.

We estimate a rough average from the plot via Jensen's inequality:
\begin{align*}
e^{\langle W \rangle} \leq \langle e^{W} \rangle.
\end{align*}
Equality is only satisfied when the work always realizes the average. This means that as the variance of the distribution increases, so should the difference below the average exponential $ \langle e^{W} \rangle$.  Thus, when reading the figure, both higher average and lower variance correspond to better engine performance.

We see that the training work rate is often above the theoretical limit set but $\langle W^{\theta'} \rangle_\infty$, but the testing work rate is always below it, as should be the case.  We also see that a higher training work rate seems to correspond to worse testing performance.  Beyond this, we focus on the testing work rate in the following analysis.

For the Five-State process, we see that the standard MLE technique without regularization badly overfits.  The average exponential asymptotic work rate after learning is well below unity. This reflects the fact that many realizations dissipate infinite work. The BAYES strategy, by contrast, improves on this work rate.  However, the variance is lower, and the average is higher for the AC and CMBD techniques, implying better test data performance for these two regularization strategies.  The CMBD strategy does best of all.

Figure \ref{fig:FiveStateComparison} shows the performance of different regularization strategies in greater detail by considering varying lengths.  For each length $L \in \{ 4,8,16,32,64,128\}$, we randomly generate $200$ words from the Five-State model and train on that model using MLE, BAYES, AC, and CMBD regularization. We also compare different memory sizes for the information engine.  We see that all strategies are relatively effective for engines with small memories.  However, for engines with three memory states, unregularized MLE results in divergent dissipation even for long training words.  We see better performance in BAYES and AC training, but for short words, we still see an advantage to training on models with less memory.  However, the CMBD technique seems to derive a consistent advantage from additional memory. This suggests that the CMBD regularization technique could be used to reliably discover complex patterns from small amounts of data.

Appendix \ref{app: Learning the Even Process and Noisy Even Process} goes into a similar analysis of the ``Even Process'' and ``Noisy Even Process''.  These are interesting since the true source is not misspecified by the candidate model class.  In both cases, since two memory states are sufficient to predict the process the $n=3$ curve lies at or below the $n=2$ for all samples, extracting less work on average.  Thus, we still see a cost to having an engine that is more complex than necessary for this two-state input.  However, CMBD regularization does the best in mitigating overfitting for larger engines.

\section{Summary}
Thermodynamics mandates both a drive towards complexity and an impetus for simplicity. In this article, we see this through the lens of thermodynamic machine learning (TML), which discovers patterns in data by maximizing work production.

Our framework implies a fundamental energy dissipation cost when an engine's internal model deviates from that of the true underlying distribution. To minimize this mismatch, an information engine must be at least as complex as the input - illustrating the principle of Requisite Complexity~\cite{Boyd17a, Boyd16d}.  Our first contribution is an exact expression for the asymptotic work rate of the information engine (see Thm. \ref{thm:WorkRate}), allowing us to directly evaluate the effectiveness of an engine in utilizing a particular information fuel.

Next, we greatly simplified the search for the maximum-work engine by analytically deriving the engine parameters in Thm. \ref{Thm:edgeweights}. This allowed us to implement TML over a wide class of engines, and demonstrate that solely maximizing work harvested from training data can lead to overfitting, with potentially dire energetic consequences (as seen in Fig. \ref{fig:TrainingAndValidation}). Moreover, this risk increases as engines scale in complexity. While the work production improves uniformly with engine memory during training, the resulting information engine anticipates phantom causal relations that are not present during testing. In extreme cases, such agents become immensely dissipative on unexpected inputs, and work dissipation can diverge towards infinity — the key signature of thermodynamic overfitting.

Our final contribution involves introducing two thermodynamically means of regularization.  The first adds a cost to initializing the information engine's edge-weights that is proportional to the dissipated work arising from relaxing to equilibrium with the prediction of each memory state. The second includes the cost of auto-correction - the energetic cost of discovering the correct predictive state when the engine starts out of sync with its information fuel. Combining the two techniques, we introduced analytical methods to construct regularized information engines that perform significantly better in harnessing free energy outside the training phase (see in Figs. \ref{fig:SingleScatter} and \ref{fig:FiveStateComparison}). In fact, our resulting engines perform similarly during testing and training phases, indicating that they have effectively mitigated overfitting. Meanwhile, the correspondence between work cost and maximum-likelihood estimation (MLE) together with the consistency of MLE~\cite{jaynes2003probability}, imply our agents converge to the correct model in the long-time limit. We thus illustrate how thermodynamic considerations can lead to effective means of pattern discovery without the risk of overfitting.

\section{Outlook}

Our results provide natural links between thermodynamics and machine learning and, in this way, establish a number of interesting connections worth further investigation. On a technical side, many of the results may be recast in the language of Fisher Information. Consider an ensemble of training words selected from the true distribution $\Pr(Y^{\theta'}_{0:L}=y_{0:L})$.  The variance in the estimator $\Theta_n^\text{max}(y_{0:L})$ can be bounded by using the Cramér-Rao bound. Asymptotically, this quantity scales linearly with word-length $L$, with the constant multiplier being the Fisher information rate \cite{radaelli2023fisher, Riech23a}. Maximum-likelihood estimators are asymptotically efficient, and so achieve this rate of learning with sufficient data \cite{lehmann2006theory}. Since the regularized TML methods presented here match MLE in the asymptotic limit, the variance of estimated parameters will follow the optimal $1/L$ scaling determined by the Fisher information rate.

Meanwhile, calculating the maximum-work engine edge-weights shown in Thm. \ref{Thm:edgeweights} strongly echoes reservoir computing in its computational simplicity \cite{hsu2023strange}. Reservoir computing leverages the inherent information processing of a complex system—the reservoir—driving it with time-series data. The only training that happens through this process occurs via an output layer, which homes in on the subdynamics of the reservoir that carry the relevant temporal correlations from the input sequence. This process is computationally simple, corresponding to a matrix inverse, with the weights of the output layers paralleling the edge-weights $\theta(y|s)$ of the predictive machine in TML. The question of what makes an effective reservoir remains open \cite{carroll2020reservoir}, with only heuristic design guides (e.g., good reservoirs are often thought to be on the ``edge of chaos'' \cite{schurmann2004edge}). It has been shown that the most effective reservoirs are deterministic (i.e., unifilar) \cite{marzen2018infinitely}. An engine's memory, viewed as a reservoir, satisfies this condition. Our framework may thus help provide thermodynamic guiding principles for finding effective reservoir computers.

More broadly, state-of-the-art time-series prediction and manipulation generally involve the use of recurrent neural networks (RNNs)~\cite{liao2018reviving} and transformers~\cite{vaswani2017attention}. In RNNs, the process of training is generally computationally intensive \cite{pascanu2013difficulty}, and recent works have suggested improved performance when such training makes use of causal discovery techniques in $\epsilon$-machines~\cite{zhang2019learning}. Meanwhile, transformer functionality is rooted in ``next token prediction'': finding a function that maps past inputs of some context length to a probability for the next ``token'' (input) \cite{kim2021code,o2021context}. Recent results showed that Transformers are indeed universal predictors, recovering a mapping from past inputs to hidden states and edge-weights \cite{basu2023transformers}. The $\epsilon$-map along with the edge-weights does just this, with the advantage that it does not require infinitely large memory to exactly model infinite Markov-order processes. Such processes would require infinite context-length for a Transformer to exactly predict them. By contrast, the memory states of a prediction engine can capture information contained in inputs arbitrarily far in the past with relatively small memory for many non-Markovian processes~\cite{marzen2016predictive}. As such, our methodologies may well provide new tools to tackle the unsustainable energetic cost of current AI models.

Finally, information is ultimately quantum-mechanical—leading to recurrent quantum models and quantum reservoir computers~\cite{gu2012quantum,gupta2021embodied,takaki2021learning}. Our thermodynamic toolkit thus provides a physical means to compare quantum and classical models operationally. Indeed, relations between work dissipation and imperfect modelling extend to the quantum regime~\cite{riechers2021}. Meanwhile, there is mounting evidence that such quantum machines can exhibit certain target behaviours in various contexts — stochastic modelling, string generation, adaptive strategies — while using less memory than any classical counterparts~\cite{binder2018practical,elliott2022quantum,vieira2022temporal,yano2021efficient,loomis2019strong}. In stochastic modelling, such memory advantages induce energetic advantages~\cite{loomis2020thermal}, while model memory can bound generalisation error in classification tasks~\cite{banchi2021generalization}. It would thus be exciting to determine how thermodynamic overfitting applies to quantum models, and thus determine whether the pressure for energetically efficient learning naturally motivates quantum-enhanced artificial intelligence.

We see an encouraging, evolving picture of how thermodynamic resources govern the emergence of predictive agents.  On the one hand, maximizing the energy extracted from information corresponds to discovering the maximum-likelihood predictive model of that information.  However, reckless energy extraction leads to overly-precise probability estimates, resulting in elevated energetic cost downstream.  Fortunately, physical constraints, such as the cost of instantiating the information engine and autocorrecting to the correct predictive state, prevent such glaring pitfalls.  It appears that nature conspires to bring about predictive machines through thermodynamic resource optimization.

\section*{Acknowledgments}
This work was supported by the Irish Research Council under grant number IRCLA/2022/3922, and by the Foundational Questions Institute and Fetzer Franklin Fund, a donor advised fund of the Silicon Valley Community Foundation, grant number FQXi-RFP-IPW-1910. ABB and JPC thank the Telluride Science Research Center for hospitality during visits and the participants of the Information Engines Workshops there. ABB acknowledges support from the Templeton World Charity Foundation Power of Information fellowships TWCF0337 and TWCF0560. This material is also based upon work supported by, or in part by, U.S. Army Research Laboratory, U.S. Army Research Office grant W911NF-21-1-0048, the National Research Foundation, Singapore, and Agency for Science, Technology and Research (A*STAR) under its QEP2.0 programme (NRF2021-QEP2-02-P06), and the Singapore Ministry of Education Tier 1 Grants RG146/20 and RG77/22, and the John Templeton Foundation grant no. 62423.

\clearpage
\onecolumngrid
\appendix

\section{Predictive Models}
\label{app:Review of Thermodynamic Machine Learning}

The predictive models encoded within an efficient information engine are Hidden Markov Models (HMMs). Specifically, they are edge-emitting such that they generate sequences through symbol-labeled transition matrices over hidden states $\mathcal{S}$:
\begin{align*}
    T^{(y)}_{s \rightarrow s'} \equiv \Pr(Y_i=y,S_{i+1}=s'|S_i=s).
\end{align*}
Predictive models are a subclass of HMMs such that the hidden states contain no information about the future beyond that which can be determined from the past: \cite{Boyd17a}
\begin{align*}
I[S_t;\overrightarrow{Y}_t|\overleftarrow{Y}_t]=0.
\end{align*}
The hidden states are called predictive states, because they contain all information in the past relevant for predicting the future.  If we additionally require that a process's model memory is minimal, then we obtain the $\epsilon$-machine: the minimal predictive generator of that process \cite{Crut12a}.  The predictive states of this minimal machine are often referred to as \emph{causal states}, because they describe the past's causal influence on the future.

The predictive complexity is described by the memory resources associated with the stationary causal state distribution $\Pr(S_t)$.  The $\epsilon$-machine is the predictive model whose hidden states minimize the $\alpha$-R\'enyi entropy $H_\alpha[S_t]$ for all $\alpha$ values \cite{loomis2019strong}.  $\alpha=1$ characterizes the statistical complexity \cite{Crut01a}:
\begin{align*}
    C_\mu & \equiv H_1[S_t]
    \\ & =-\sum_{s} \Pr(S_t=s) \ln \Pr(S_t=s), \nonumber
\end{align*} 
which is the channel capacity (measured in Nats) necessary to communicate from past to future.  By comparison, $\alpha=0$ yields is the  topological predictive complexity:
\begin{align*}
    H_0[S_t]= \ln |\mathcal{S}|,
\end{align*}
which is the log of the number of hidden states necessary to predictively model the process.  $H_0[S_t]$ is the measure that is most directly relevant in designing information engines, because it determines the number of memory states that an engine must have in order to efficiently harvest information.

The $\epsilon$-machine's minimal causal states are determined by a \emph{causal equivalence relation}~\cite{Crut88a}. As a result, they have the convenient property of \emph{unifilarity} \cite{Crut08b}, which means that the next causal state $S_{i+1}$ is uniquely specified by the current one $S_i$ and its output $Y_i$ by an $\epsilon$-map:
\begin{align*}
    S_{i+1}= \epsilon(Y_i,S_i).
\end{align*}
This is also known as the \emph{topology} of the $\epsilon$-machine \cite{strelioff2014bayesian}.  In addition, an $\epsilon$-machine has a unique start causal state~$s^*$.  In the case of a bi-infinite process $Y_{-\infty:\infty}$ which is stationary, this reflects the belief state of having seen nothing so far, but in the non-stationary semi-infinite $Y_{0:\infty}$ case, $s^*$ is simply the sufficient statistic of the (non-existent) past about the future~\cite{Boyd21a}.  

Unifilarity and a unique start state $s_0$ guarantee that an output sequence $y_{0:i}$ will lead to a unique causal state:
\begin{align*}
    s_i= \epsilon(y_{0:i},s_0),
\end{align*}
where we've re-used the update map notation, defining:
\begin{align*}
    \epsilon(y_{0:i},s_0) \equiv \epsilon ( y_{i-1},\cdots \epsilon(y_1,\epsilon(y_0,s_0)) \cdots ).
\end{align*}
Because of the isomorphism between the information engine and its $\epsilon$-machine model, the engine's memory state $x_i$ will be the same function of its input $x_i=\epsilon(y_{0:i},s^*)$.  Thus, we can see an efficient engine's memory dynamics as an input-conditioned deterministic dynamical system, paralleling efficient reservoir computers~\cite{marzen2018infinitely}.

Unifilarity means that we can characterize any $\epsilon$-machine in terms of its topology, start state, and edge-weights, as shown in the example of the Even Process in Fig. \ref{fig:ModelParameters}.  For a particular topology, the edge-weights $\theta(y|s)$ are the probability of outputting $y$ from causal state $s$:
\begin{align*}
    \theta(y|s) & \equiv T^{(y)}_{s \rightarrow \epsilon(y,s)}
    \\ & = \Pr(Y^{\theta}_i=y|S_i=s).
\end{align*}
This meaningfully simplifies the task of Thermodynamic Machine Learning.

Finally, $\epsilon$-machines can produce any process and yield the same likelihood as any predictive model, meaning that it is sufficient to limit our model class $\Theta$ to these models.  Such machines are described by a causal update map on the predictive states and edge-weights, meaning these are the explicit parameters we must explore through training.

\section{Asymptotic Work Rate: Derivation}
\label{app:Derivation of Asymptotic Work Rate}

The work produced by an efficient engine with model $\theta$ at the $L$th time step can be expressed:
\begin{align*}
    \beta \langle W^\theta \rangle_{L} &  \equiv \beta \langle W^\theta \rangle_{0:L+1}-\beta \langle W^\theta \rangle_{0:L}
    \\ & = \ln |\mathcal{Y}|+\sum_{y_{0:L+1}}\Pr(Y^{\theta'}_{0:L+1}=y_{0:L+1}) \ln \Pr(Y^{\theta}_{0:L+1}=y_{0:L+1}) -\sum_{y_{0:L}}\Pr(Y^{\theta'}_{0:L}=y_{0:L}) \ln \Pr(Y^\theta_{0:L}=y_{0:L}) 
    \\ & = \ln |\mathcal{Y}|+\sum_{y_{0:L+1}}\Pr(Y^{\theta'}_{0:L+1}=y_{0:L+1}) \ln \Pr(Y^{\theta}_{0:L+1}=y_{0:L+1}) -\sum_{y_{0:L+1}}\Pr(Y^{\theta'}_{0:L+1}=y_{0:L+1}) \ln \Pr(Y^\theta_{0:L}=y_{0:L}) 
    \\ & =\ln |\mathcal{Y}|+\sum_{y_{0:L+1}}\Pr(Y^{\theta'}_{0:L+1}=y_{0:L+1}) \ln \frac{\Pr(Y^{\theta}_{0:L+1}=y_{0:L+1})}{\Pr(Y^{\theta}_{0:L}=y_{0:L})}
    \\ & =\ln |\mathcal{Y}|+\sum_{y_{0:L+1}}\Pr(Y^{\theta'}_{0:L+1}=y_{0:L+1}) \ln \Pr(Y^{\theta}_L=y_L|Y^{\theta}_{0:L}=y_{0:L}).
\end{align*}
Using the fact that, for the model $\theta$, the edge-weights are related to output probabilities:
\begin{align*}
    \theta (y_L| \epsilon(y_{0:L}) )= \Pr(Y^{\theta}_L=y_L|Y^{\theta}_{0:L}=y_{0:L}) ,
\end{align*}
we can express the asymptotic work production:
\begin{align*}
    \beta \langle W^\theta \rangle_{L} =\ln |\mathcal{Y}|+\sum_{y_{0:L+1}}\Pr(Y^{\theta'}_{0:L+1}=y_{0:L+1}) \ln  \theta (y_L| \epsilon(y_{0:L}) ) .
\end{align*}
Another way of expressing this is to say the probability of the agent's causal state $s$ at time $L$ given the past inputs is $\Pr(S_L=s|Y^{\theta'}_{0:L}=y_{0:L})= \delta_{s, \epsilon(y_{0:L})}$, so that the work production can be expressed:
\begin{align*}
    \beta \langle W^\theta \rangle_{L} &  =\ln |\mathcal{Y}|+\sum_{y_{0:L+1}}\Pr(Y^{\theta'}_{0:L+1}=y_{0:L+1}) \ln  \theta (y_L| \epsilon(y_{0:L}) )
    \\ &  =\ln |\mathcal{Y}|+\sum_{y_{0:L+1},s}\Pr(Y^{\theta'}_{0:L+1}=y_{0:L+1}) \delta_{s, \epsilon(y_{0:L})} \ln \theta (y_L| \epsilon(y_{0:L}) )
    \\ &  =\ln |\mathcal{Y}|+\sum_{y_{0:L+1},s}\Pr(S_L=s,Y^{\theta'}_{0:L+1}=y_{0:L+1}) \ln \theta (y_L| s) .
\end{align*}
The causal states of the input's $\epsilon$-machine
are given by the random variables $S'_L$, so we express the asymptotic work:
\begin{align*}
    \beta \langle W^\theta \rangle_{L} &   =\ln |\mathcal{Y}|+\sum_{y_{0:L+1},s,s'}\Pr(S_L=s,S'_L=s',Y^{\theta'}_L=y_L,Y^{\theta'}_{0:L}=y_{0:L}) \ln \theta (y_L| s)
    \\ &   =\ln |\mathcal{Y}|+\sum_{y_{L},s,s'}\Pr(S_L=s,S'_L=s',Y^{\theta'}_L=y_L) \ln \theta (y_L| s)
    \\ &   =\ln |\mathcal{Y}|+\sum_{y,s,s'}\Pr(S_L=s,S'_L=s')\theta' (y_L| s')\ln \theta (y_L| s).
\end{align*}
If we take the asymptotic limit, we obtain the steady-state distribution over the causal states of the input $\epsilon$-machine driving the estimated $\epsilon$-machine:
\begin{align*}
    \pi_{s,s'} \equiv \lim_{L \rightarrow \infty}\Pr(S_L=s,S'_L=s'),
\end{align*}
which can be found by solving the steady-state equation:
\begin{align*}
\pi_{s_1,s_1'} = \sum_{s_0,s_0',y} \delta_{s_1,\epsilon(s_0,y)}  \delta_{s'_1,\epsilon'(s'_0,y)}  \theta' (y| s'_0) \pi_{s_0,s'_0}.
\end{align*}
Thus, we obtain the expression for the asymptotic work rate:
\begin{align*}
\beta \langle W^\theta \rangle_\infty =\ln |\mathcal{Y}| + \sum_{s,s',y} \pi_{s,s'} \theta'(y|s')  \ln \theta(y|s).
\end{align*}

\section{Work Production From Distributed Start State}
\label{app:Work Production From Distributed Start State}

Reference \cite{Boyd21a} primarily considers the work production from a particular start-state.  However, it also includes an equation for the work production when the hidden state starts in a distribution $\Pr(X_0=x_0)$, given in Eq. (G1):
\begin{align*}
    \beta \langle W^\theta (y_{0:L}) \rangle & = \sum_{x_0} \Pr(X_0=x_0) \ln \prod_{i=0}^{L-1}\theta(y_i|\epsilon(y_{0:i},x_0)) + L \ln |\mathcal{Y}|+\sum_{x_0}\Pr(X_0=x_0) \ln \frac{Pr(X^\theta_0=x_0)}{Pr(X^\theta_L=\epsilon(y_{0:L},x_0))},
\end{align*}
where $\Pr(X^\theta_i)$ is the estimated distribution of the agent memory $\mathcal{X}$ at time $i$, and $\Pr(X_i)$ is the actual distribution at the same time.  This is the average work that is produced from the initial memory distribution $\Pr(X_0)$ if one applies an efficient information engine based on the model $\theta$.  

However, we should note that operating on the input string will transform the memory distribution from $\Pr(X_0=x_0)$ to $\Pr(X_L=x_L)=\sum_{y_{0:L},x_0}\Pr(X_0=x_0,Y_{0:L}=y_{0:L}) \delta_{x_L,\epsilon(y_{0:L},x_0)}$, which the agent will estimate as transforming from $\Pr(X^\theta_0=x_0)$ to $\Pr(X^\theta_L=x_L)=\sum_{y_{0:L},x_0}\Pr(X^\theta_0=x_0,Y^\theta_{0:L}=y_{0:L}) \delta_{x_L,\epsilon(y_{0:L},x_0)}$.  To reset the protocol, we will reset to our estimated initial memory distribution $\Pr(X^\theta_0)$, which will involve an efficient transformation for which (according to Thm. 1 of Ref. \cite{Boyd21a}) the work production for input and output will be:
\begin{align*}
    \langle W(x_L \rightarrow x') \rangle = \ln \frac{\Pr(X^\theta_L=x_L)}{\Pr(X^\theta_0=x')}.
\end{align*}
Note that at the beginning of the reset, the distribution over $\mathcal{X}$ is given by:
\begin{align*}
    \Pr(X^\theta_L=x_L|Y^\theta_{0:L}=y_{0:L}) \equiv \sum_{x_0}\Pr(X_0=x_0) \delta_{x_L,\epsilon(y_{0:L},x_0)}.
\end{align*}
After the reset, the distribution is $\Pr(X^\theta_0=x')$, and because we quasistatically evolve to the post-reset distribution, the final distribution is independent of the distribution before reset, meaning that the average work production is:
\begin{align*}
& \sum_{x_L,x'}\Pr(X^\theta_L=x_L|Y^\theta_{0:L}=y_{0:L})\Pr(X^\theta_0=x') \langle W(x_L \rightarrow x') \rangle
\\ & =\sum_{x_L,x'}\Pr(X^\theta_L=x_L|Y^\theta_{0:L}=y_{0:L})\Pr(X^\theta_0=x') \ln \Pr(X_L^\theta=x_L) -\sum_{x_L,x'}\Pr(X^\theta_L=x_L|Y^\theta_{0:L}=y_{0:L})\Pr(X^\theta_0=x') \ln \Pr(X^\theta_0=x')
\\ & =\sum_{x_L}\Pr(X^\theta_L=x_L|Y^\theta_{0:L}=y_{0:L}) \ln \Pr(X_L^\theta=x_L) -\sum_{x'}\Pr(X^\theta_0=x') \ln \Pr(X^\theta_0=x')
\\ & =\sum_{x_L}\sum_{x_0}\Pr(X_0=x_0) \delta_{x_L,\epsilon(y_{0:L},x_0)} \ln \Pr(X_L^\theta=x_L) -\sum_{x'}\Pr(X^\theta_0=x') \ln \Pr(X^\theta_0=x')
\\ & =\sum_{x_0}\Pr(X_0=x_0)  \ln \Pr(X_L^\theta=\epsilon(y_{0:L},x_0)) -\sum_{x_0}\Pr(X^\theta_0=x_0) \ln \Pr(X^\theta_0=x_0).
\end{align*}
When we add this reset cost to the work benefit of harvesting energy, we find the total work production:
\begin{align*}
     \beta \langle W^\theta (y_{0:L}) \rangle  = \sum_{x_0} \Pr(X_0=x_0) \Pr(Y^\theta_{0:L}=y_{0:L}|S_0=x_0)+ L \ln |\mathcal{Y}| +\sum_{x_0}(\Pr(X_0=x_0)-\Pr(X^\theta_0=x_0)) \ln \Pr(X^\theta_0=x_0),
\end{align*}
where the probability of input $y_{0:L}$ given the initial state causal state is:
\begin{align*}
    \Pr(Y^\theta_{0:L}=y_{0:L}|S_0=x_0)=\prod_{i=0}^{L-1}\theta(y_i|\epsilon(y_{0:i},x_0)).
\end{align*}

Let us posit that the agent correctly estimates the initial distribution to be $p(s_0)$, such that:
\begin{align*}
p(x_0) = \Pr(X_0=x_0)= \Pr(X^\theta_0=x_0),
\end{align*}
either because it prepares the initial memory distribution, or has reliably measured it in the past.  Finally, since $\mathcal{X}=\mathcal{S}$, we can express the work production on this distributed state:
\begin{align*}
     \beta \langle W^\theta (y_{0:L}) \rangle  = \sum_{x_0}p(s_0) \Pr(Y^\theta_{0:L}=y_{0:L}|S_0=s_0)+ L \ln |\mathcal{Y}|,
\end{align*}
which is purely a function of underlying $\epsilon$-machine model $\theta$.

\section{Maximum Work Edge-Weights}
\label{app:Maximum Work Edge Weights}

For a given engine topology $\epsilon$, a particular initial distribution $p(s)$, and an additional energy cost of initializing the machine $C(\theta)$, the work production from a particular input string $y_{0:L}$ is:
\begin{align*}
    \beta \langle W^\theta_G (y_{0:L}) \rangle & = L \ln |\mathcal{Y}| -C(\theta) + \sum_{s_0}p(s_0) \ln \Pr(Y^\theta_{0:L}=y_{0:L}|S_0=s_0) 
    \\ & = L \ln |\mathcal{Y}|-C(\theta) +\sum_{s} p(s_0) \ln \prod_{i=0}^{L-1} \theta(y_i|\epsilon(y_{0:i},s_0)) .
\end{align*}
Here, the subscript $G$ indicates that this work production is modified in an attempt to make maximum-work training generalize.  We find the edge-weights for this topology by counting the input-memory state combinations of the engine when driven by $y_{0:L}$.  $N(y,s|s_0,\epsilon,y_{0:L})$ is the number of times predictive memory state $s$ receives input $y$ given that an engine with topology $\epsilon$ started in $s_0$ and received input word $y_{0:L}$.  This allows us to rewrite the work production:
\begin{align*}
    \beta \langle W^\theta_G (y_{0:L}) \rangle & = L \ln |\mathcal{Y}|-C(\theta) + \sum_{s_0}p(s_0)  \ln \prod_{i=0}^{L-1} \theta(y_i|\epsilon(y_{0:i},s_0))
    \\ & =L \ln |\mathcal{Y}| -C(\theta) +\sum_{s_0} p(s_0) \sum_{s',y'}  N(y,s|s^*,\epsilon,y_{0:L})\ln \theta(y'|s').
\end{align*}

We can find the resulting maximum-work edge-weights by taking the set of constraints:
\begin{align*}
g_{s'}(\theta)\equiv \sum_{y'}\theta(y'|s')=1,
\end{align*}
and solving:
\begin{align*}
    \partial_{\theta(y|s)}\beta \langle W^\theta_G (y_{0:L}) \rangle & = \sum_{s'}\lambda_{s'} \partial_{\theta(y|s)}g_{s'}(\theta)
    \\ -\partial_{\theta(y|s)}C(\theta)+\sum_{s_0,s',y'} p(s_0) \frac{N(y,s|s_0,\epsilon,y_{0:L})}{\theta(y'|s')}\delta_{sy,s'y'} & = \sum_{s'}\lambda_{s'} \sum_{y'}\delta_{sy,s'y'}
     \\ -\partial_{\theta(y|s)}C(\theta)+ \frac{\sum_{s_0} p(s_0)N(y,s|s_0,\epsilon,y_{0:L})}{\theta(y|s)} & = \lambda_{s} 
     \\ \frac{\sum_{s_0} p(s_0)N(y,s|s_0,\epsilon,y_{0:L})-\theta(y|s)\partial_{\theta(y|s)}C(\theta)}{\lambda_s} & = \theta(y|s)
     \\ \frac{\sum_{s_0} p(s_0)(N(y,s|s_0,\epsilon,y_{0:L})-\theta(y|s)\partial_{\theta(y|s)}C(\theta))}{\lambda_s} & = \theta(y|s).
\end{align*}
The constraint of normalized edge-weights $\sum_y \theta(y|s)$ allows us to solve for $\lambda_s$ as the normalization constant:
\begin{align*}
    \sum_{y',s_0} p(s_0)(N(y,s|s_0,\epsilon,y_{0:L})-\theta(y'|s)\partial_{\theta(y'|s)}C(\theta))=\lambda_s.
\end{align*}
The resulting maximum-work edge-weights $\Theta^\text{max}_{p,\epsilon,y_{0:L}}(y|s)$  are the solution to the recursive relation:
\begin{align*}
   \theta(y|s) = \frac{\sum_{s_0} p(s_0)(N(y,s|s_0,\epsilon,y_{0:L})-\theta(y|s)\partial_{\theta(y|s)}C(\theta))}{\sum_{y',s_0} p(s_0)(N(y',s|s_0,\epsilon,y_{0:L})-\theta(y'|s)\partial_{\theta(y'|s)}C(\theta))}.
\end{align*}

Here, we consider the cost:
\begin{align*}
    C(\theta) = - \alpha \sum_{y',s'} \ln \theta(y'|s')+\text{const.}, 
\end{align*}
such that:
\begin{align*}
-\partial_{\theta(y|s)}C(\theta) = \frac{\alpha}{\theta(y|s)},
\end{align*}
and we can remove the dependence on the right-hand side of the equation. Our resulting solution for edge-weights is:
\begin{align*}
\label{eq:appEdgeWeights}
    \Theta^\text{max}_{p,\alpha,\epsilon,y_{0:L}}(y|s)& = \frac{\sum_{s_0} p(s_0)(\alpha+N(y,s|s_0,\epsilon,y_{0:L}))}{\sum_{y',s_0} p(s_0)(\alpha+N(y',s|s_0,\epsilon,y_{0:L}))}.
\end{align*}

Note that if we are implementing the transformation quasistatically, the distribution $\theta(y|s)$ must be the equilibrium distribution over $\mathcal{Y}$ when conditioned on the engine memory state $s$.  If we have the initial energy landscape $E(y,s)$, then the equilibrium distribution is:
\begin{align*}
    \pi(y,s) = e^{\beta(F^\text{eq}-E(y,s))},
\end{align*}
meaning:
\begin{align*}
    \theta(y|s) & =\frac{ \pi^\theta(y,s)}{\pi^\theta(s)}
    \\ & = \frac{e^{-\beta E^\theta(y,s)}}{\sum_y e^{-\beta E^\theta(y,s)}} ,
\end{align*}
where $\pi^\theta(s) \equiv \sum_y \pi^\theta(y,s)$ is the marginal equilibrium distribution over the agent memory.

Note that we can also write the metastable free energy \cite{Parr15a, riechers2020balancing, wimsatt2022trajectory},  of the memory state $s$:
\begin{align*}
    \beta F^\theta(s) = - \ln \sum_{y}e^{-\beta E^\theta(y,s)}.
\end{align*}
We can express the edge-weights in terms of the free energy of memory state $s$ and the initial energy of:
\begin{align*}
 \theta(y|s) = e^{\beta(F^\theta(s)-E^\theta(s,y))}.
\end{align*}
Thus, the cost can be expressed as the thermodynamic quantity:
\begin{align*}
    C(\theta)&  = - \alpha \sum_{y',s'} \ln \theta(y'|s')
    \\ & = \alpha \sum_{y',s'} \beta(E^\theta(s,y)-F^\theta(s)).
\end{align*}
If we incur a cost that is proportional to the total energetic excess beyond the free energy for each memory state, then we will find the solution for the edge-weights given by  $\Theta^\text{max}_{p,\alpha,\epsilon,y_{0:L}}(y|s)$.

This can be recovered by considering the cost of initializing every edge-weight and relaxing into equilibrium with every predictive state $s$. Mechanically, this relates to the fact that the energy of a state is related to its equilibrium distribution:
\begin{align*}
E(z) = F^\text{eq}-\ln \Pr(Z^\text{eq}=z).
\end{align*}
At the beginning of a quasistatic protocol, the energy landscape must be in equilibrium with the estimated distribution, meaning that:
\begin{align*}
E(y,s) =F^\text{eq}-\ln \theta(y|s)\pi(s),
\end{align*}
where $\pi(s)= \Pr(S^\text{eq}=s)$.  For each state, we must initialize it.  This corresponds to confining the state to $y$ and $s$, then letting it relax to equilibrium with output distribution $\theta(y|s)$.  The amount of work dissipated in this relaxation is the change in nonequilibrium addition to free energy, which is the relative entropy:
\begin{align*}
\langle W_\text{diss}(y,s)\rangle & = D_{KL}( \delta_{y,y'} \delta_{s,s'}|| \theta(y'|s')\pi(s))-D_{KL}( \theta(y'|s'')\delta_{s,s'}|| \theta(y'|s')\pi(s'))
\\ & = \sum_{s'y'} \delta_{y,y'} \delta_{s,s'} \ln \frac{\delta_{y,y'} \delta_{s,s'}}{ \theta(y'|s')\pi(s')}- \sum_{s'y'} \theta(y'|s')\delta_{s,s'} \ln \frac{\theta(y'|s')\delta_{s,s'}}{ \theta(y'|s')\pi(s')}
\\ & = - \ln \theta(y|s) \pi(s) - \sum_{y'} \theta(y'|s) \ln \frac{1}{ \pi(s)}
\\ & = - \ln \theta(y|s) \pi(s) +\ln  \pi(s)
\\ & = - \ln \theta(y|s).
\end{align*}
If we incur this dissipation for every combination of predictive state and input, then the total dissipated work is:
\begin{align*}
\langle W^\text{prepare}_\text{diss} \rangle =  -\sum_{s,y} \ln \theta(y|s).
\end{align*}
Thus, our cost is proportional to our original cost function $C(\theta)$, meaning that we can set:
\begin{align*}
    C(\theta) = \alpha  \langle W^\text{prepare}_\text{diss} \rangle,
\end{align*}
to solve for the the same maximum-work edge-weights $\Theta^\text{max}_{p,\alpha,\epsilon,y_{0:L}}(y|s)$ described in Eq. (\ref{eq:appEdgeWeights}).

We investigate four cases:
\begin{enumerate}
     \setlength{\topsep}{-5pt}
      \setlength{\itemsep}{-5pt}
      \setlength{\parsep}{-5pt}
    \item Without any attempt at generalization, we choose $\alpha=0$ and a peaked initial distribution $p(s_0)=\delta_{s_0,s^*}$, then we obtain the standard likelihood expression for work:
    \begin{align}
    \beta \langle W^\theta_G (y_{0:L}) \rangle & = L \ln |\mathcal{Y}| +  \ln \Pr(Y_{0:L}=y_{0:L}|S_0=s^*) 
    \\ & = L \ln |\mathcal{Y}|+\ln \prod_{i=0}^{L-1} \theta(y_i|\epsilon(y_{0:i},s^*)) \nonumber .
\end{align}
which produces the familiar MLE estimate for the edge-weights:
\begin{align*}
    \Theta^\text{max}_{\delta_{s,s^*},0,\epsilon,y_{0:L}}(y|s)& = \frac{N(y,s|s^*,\epsilon,y_{0:L})}{\sum_{y'} N(y',s|s^*,\epsilon,y_{0:L})}.
\end{align*}
\item If we attempt to generalize through autocorrection, we choose a uniform initial state $p(s)=1/|\mathcal{S}|$ and zero contribution from the complexity cost $\alpha=0$, yielding the work production:
    \begin{align*}
    & \beta \langle W^\theta_G (y_{0:L}) \rangle  = L \ln |\mathcal{Y}| + \frac{1}{|\mathcal{S}|} \sum_{s_0}  \ln \Pr(Y_{0:L}=y_{0:L}|S_0=s_0).
\end{align*}
The modified edge-weight estimator includes the contributions from every start state:
\begin{align}
    \Theta^\text{max}_{1/|\mathcal{S}|,0,\epsilon,y_{0:L}}(y|s)& = \frac{\sum_{s_0} N(y,s|s_0,\epsilon,y_{0:L})}{\sum_{y',s_0} N(y',s|s_0,\epsilon,y_{0:L})}.
\end{align}

\item We might also try to generalize by only adding a cost from the dissipation of initializing the system at $\alpha=1$, while allowing a unique start state $p(s)=\delta_{s,s^*}$, such that the work production is:
\begin{align*}
    \beta \langle W^\theta_G (y_{0:L}) \rangle  & =   \ln \Pr(Y_{0:L}=y_{0:L}|S_0=s^*) + L \ln |\mathcal{Y}| - |\mathcal{Y}| \sum_{s'}\beta \langle W^\theta_\text{diss} (s') \rangle. \nonumber
\end{align*}
The resulting maximum-work estimator is:
\begin{align*}
    \Theta^\text{max}_{\delta_{s,s^*},1,\epsilon,y_{0:L}}(y|s)& = \frac{1+N(y,s|s^*,\epsilon,y_{0:L})}{|\mathcal{Y}|+\sum_{y'}N(y',s|s^*,\epsilon,y_{0:L})}.
\end{align*}
This is precisely Laplace's rule of succession applied to each causal state, which is the result of Bayesian updating from a uniform distribution over output probabilities.

\item Last, we combine the complexity cost $\alpha=1$ with the cost of autocorrection $p(s)=1/|\mathcal{S}|$ such that the work production is:
\begin{align*}
    \beta \langle W^\theta_G (y_{0:L}) \rangle & = \frac{1}{|\mathcal{Y}|}\sum_{s} \ln \Pr(Y_{0:L}=y_{0:L}|S_0=s_0)+ L \ln |\mathcal{Y}|-|\mathcal{Y}| \sum_{s'}\beta \langle W^\theta_\text{diss} (s') \rangle.
\end{align*}
The maximum-work edge-weights combine the benefits of both generalization strategies:
\begin{align*}
    \Theta^\text{max}_{1/|\mathcal{Y}|,1,\epsilon,y_{0:L}}(y|s)& = \frac{\sum_{s_0} (1+N(y,s|s_0,\epsilon,y_{0:L}))}{\sum_{y',s_0} (1+N(y',s|s_0,\epsilon,y_{0:L}))}.
\end{align*}

\section{Entropy Production as Divergence}
\label{app: Entropy Production as Divergence}

We monitor engine inefficiency via entropy production (dissipated work) \cite{Parr15a}. This is the net work invested minus the change in nonequilibrium free energy:
\begin{align*}
    \langle \Sigma^\theta \rangle_{0:L}/k_B = \beta (-\langle W^\theta \rangle_{0:L}-\Delta F^\text{NEQ}_{0:L}).
\end{align*}
Because an information reservoir has equal energy for all configurations, the contributions to the nonequilibrium free energy are only informational:
\begin{align*}
    \beta \Delta F^\text{NEQ}_{0:L} & =-\Delta H_{0:L}
    \\ & = H[Y^{\theta'}_{0:L}]-L \ln |\mathcal{Y}|
\end{align*}
where $H[Z]\equiv - \sum_{z} \Pr(Z=z) \ln \Pr(Z=z) $ is the Shannon entropy of of random variable $Z$ measured in Nats.  The average work, by contrast, is:
\begin{align*}
\langle W^\theta \rangle_{0:L} = L \ln |\mathcal{Y}| + \sum_{y_{0:L}} \Pr(Y^{\theta'}_{0:L}=y_{0:L}) \ln \Pr(Y^{\theta}_{0:L}=y_{0:L}).
\end{align*}
Adding these terms together, we get the relative entropy between the true input process $Y^{\theta'}_{0:L}$ and the estimated process $Y^{\theta}_{0:L}$:
\begin{align*}
    \langle \Sigma^\theta \rangle_{0:L}/k_B & = \sum_{y_{0:L}} \Pr(Y^{\theta'}_{0:L}=y_{0:L}) \ln \frac{\Pr(Y^{\theta'}_{0:L}=y_{0:L})}{\Pr(Y^{\theta}_{0:L}=y_{0:L})}
    \\ & \equiv D_{KL}(Y^{\theta'}_{0:L}||Y^{\theta}_{0:L}), \nonumber
\end{align*}
which is the additional dissipation that is incurred by misestimating the input distribution \cite{Kolc17a, Riec20a}.

If a learning process refines and improves the estimator $\theta$, its divergence from the actual process $Y^{\theta'}_{0:L}$ should diminish, reflecting the idea that learning reduces entropy production \cite{milburn2023quantum, gold2019self}.  The consistency of MLE guarantees that, as long as the true process can be described by one of the available models, the learning process will discover the true distribution \cite{Cove06a, jaynes2003probability}.  Thus, given a sufficiently large class of $\epsilon$-machines to select from, thermodynamic machine learning will discover the hidden process and minimize the average entropy production to zero for future inputs.

For an information engine harvesting energy from an information reservoir, the asymptotic rate of change in free energy is the difference between the entropy rates of the input process and output process \cite{Boyd15a}:
\begin{align*}
    \beta \Delta F^\text{NEQ}_\infty & \equiv \lim_{L \rightarrow \infty} \beta (\Delta F^\text{NEQ}_{0:L+1}-\Delta F^\text{NEQ}_{0:L})
    \\ & = h_\mu^{\theta'}- \ln |\mathcal{Y}|.
\end{align*}
Here, the entropy rate $h_\mu^{\theta'}$ of the inputs can be directly calculated from its $\epsilon$-machine \cite{Crut01a}:
\begin{align*}
    h^{\theta'}_\mu = -\sum_{s',y} \pi'_{s'} \theta'(y|s') \ln \theta'(y|s'),
\end{align*}
where $\pi_s' = \sum_{s} \pi_{s,s'}$ is the steady-state distribution of the true $\epsilon$-machine's causal states.  As a result, the asymptotic entropy production rate can be expressed as the average divergence between the edge-weights:
\begin{align*}
    \langle \Sigma^\theta \rangle_\infty/k_B & = \beta (-\langle W^\theta \rangle_\infty-\Delta F^\text{NEQ}_\infty)
    \\ & = \sum_{s,s',y} \pi_{s,s'} \theta'(y|s') \ln \frac{\theta'(y|s')}{\theta(y|s)} \nonumber
    \\ & = \sum_{s,s'} \pi_{s,s'}  D_{KL} (Y^{\theta'}_i|S'_i=s'||Y^{\theta}_i|S_i=s). \nonumber
\end{align*}
Here:
\begin{align*}
 D_{KL} (Y^{\theta'}_i|S'_t=s'||Y^{\theta}_i|S_i=s)  \equiv \sum_y  \Pr(Y^{\theta'}_i=y|S'_i=s') \ln \frac{ \Pr(Y^{\theta'}_i=y|S'_i=s')}{\Pr(Y^{\theta}_i=y|S_i=s)} \nonumber
\end{align*}
is the divergence between the prediction of the next input from state memory state $s$ in model $\theta$ and the prediction from memory state $s'$ in model $\theta'$.

\section{Training and Testing Individual Words}
\label{app: Individual Words}

As an illustrative example, consider two training words of length $100$ generated from the Five-State machine shown in Fig. \ref{fig:ThreeMachines}.  Figure \ref{fig:FitExamples} shows the resulting (exponentiated) maximum training work production rate $e^{\beta \langle W^\text{max}_n(y_{0:L}) \rangle/L}$ (solid lines) for memory sizes $n \in \{1,2,3\}$ when we train on length-$1$ to length-$100$ for two different training words.  The dashed lines show the (exponentiated) asymptotic work production rate $e^{\beta \langle W^{\Theta^\text{max}_n(y_{0:L})} \rangle_\infty}$, representing our performance in testing.  

In the upper-left corner of each diagram, we see the result of un-regularized work-maximization ($\alpha=0$ and $p(s)=\delta_{s,s^*}$), which yields improved work production in training with greater memory.  But, it dangerously overfits, dissipating divergent work for large model memories.  

By contrast, we see in the upper right corners that when $\alpha=1$ and $p(s)=\delta_{s,s^*}$, leading to Bayesian estimates of edge-weights, the model doesn't divergently overfit.  For some training words (word 1 of Fig. \ref{fig:FitExamples}), we see that the model learns how to extract almost all available work for the large-memory case ($n=3$).  However, other training words (word 2 of Fig. \ref{fig:FitExamples}) show that, even though the model doesn't divergently overfit, it dissipates considerable energy, producing negative work on average for the large-memory case.

The case of autocorrection in the bottom left, when  $\alpha=0$ and $p(s)=1/|\mathcal{S}|$, leads to error mitigation as well.  There is still divergent dissipation for very short training words shown in Fig. \ref{fig:FitExamples}, but the thermodynamic learning appears to discover useful patterns in the input data, with a growing advantage for larger memories.  This suggests that autocorrection mitigates overfitting for complex processes.

Finally, in the bottom right ($\alpha=1$ and $p(s)=1/|\mathcal{S}|$), where we have combined regularization strategies to both require autocorrection and a complexity cost in initializing the model, we see that overfitting appears to be largely mitigated.  Generalizing in this way appears to allow us to add memory without adding considerable risk of overfitting.  This is promising, as we would like to be able to perform thermodynamic learning with much more complex $\epsilon$-machine models to discover complex patterns in data without the risk of projecting structure onto the data that isn't there.

On the whole, it appears that the result of training varies considerably depending on the training word.  This is as expected for short word lengths, as the same word can come from many different processes. For this reason, we now explore in more detail as we vary across ensembles of training words.

\begin{figure*}[tbp]
\centering
\includegraphics[width=\columnwidth]{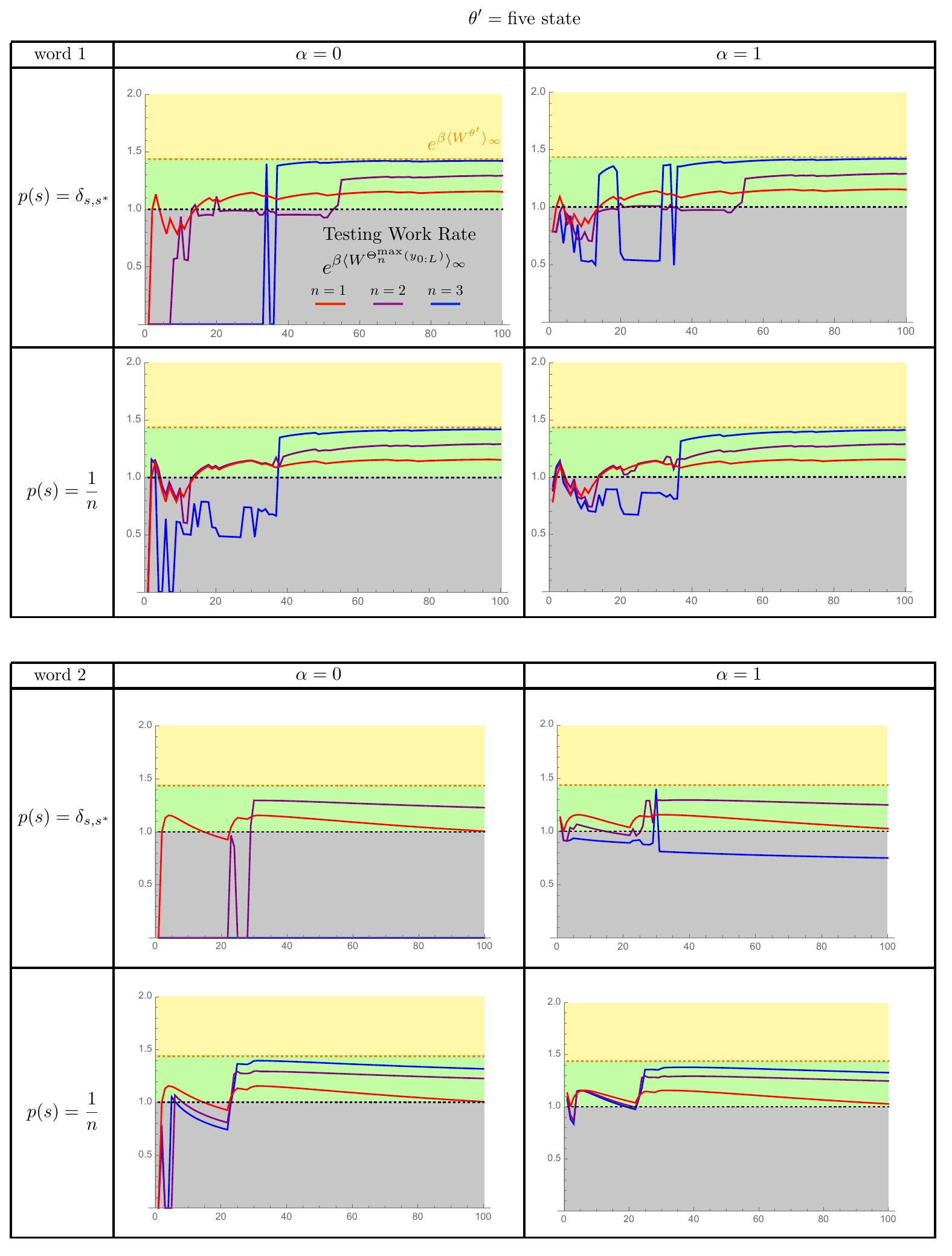}
\caption{The top and bottom correspond to two different training words (``word 1'' and ``word 2'') from the ``Five-State'' $\epsilon$-machine.  We plot testing work rate with four different regularization strategies: 1) Unregularized TML ($\alpha=0$ and $p(s)=\delta_{s,s^*}$)  2) Autocorrection ($\alpha=0$ and $p(s)=1/|\mathcal{S}|$) 3) Bayesian complexity cost ($\alpha=1$ and $p(s)=\delta_{s,s^*}$) 4) Combined ($\alpha=1$ and $p(s)=1/|\mathcal{S}|$).}
\label{fig:FitExamples}
\end{figure*}

\section{Learning the Even Process and Noisy Even Process}
\label{app: Learning the Even Process and Noisy Even Process}

\begin{figure*}[tbp]
\centering
\includegraphics[width=\columnwidth]{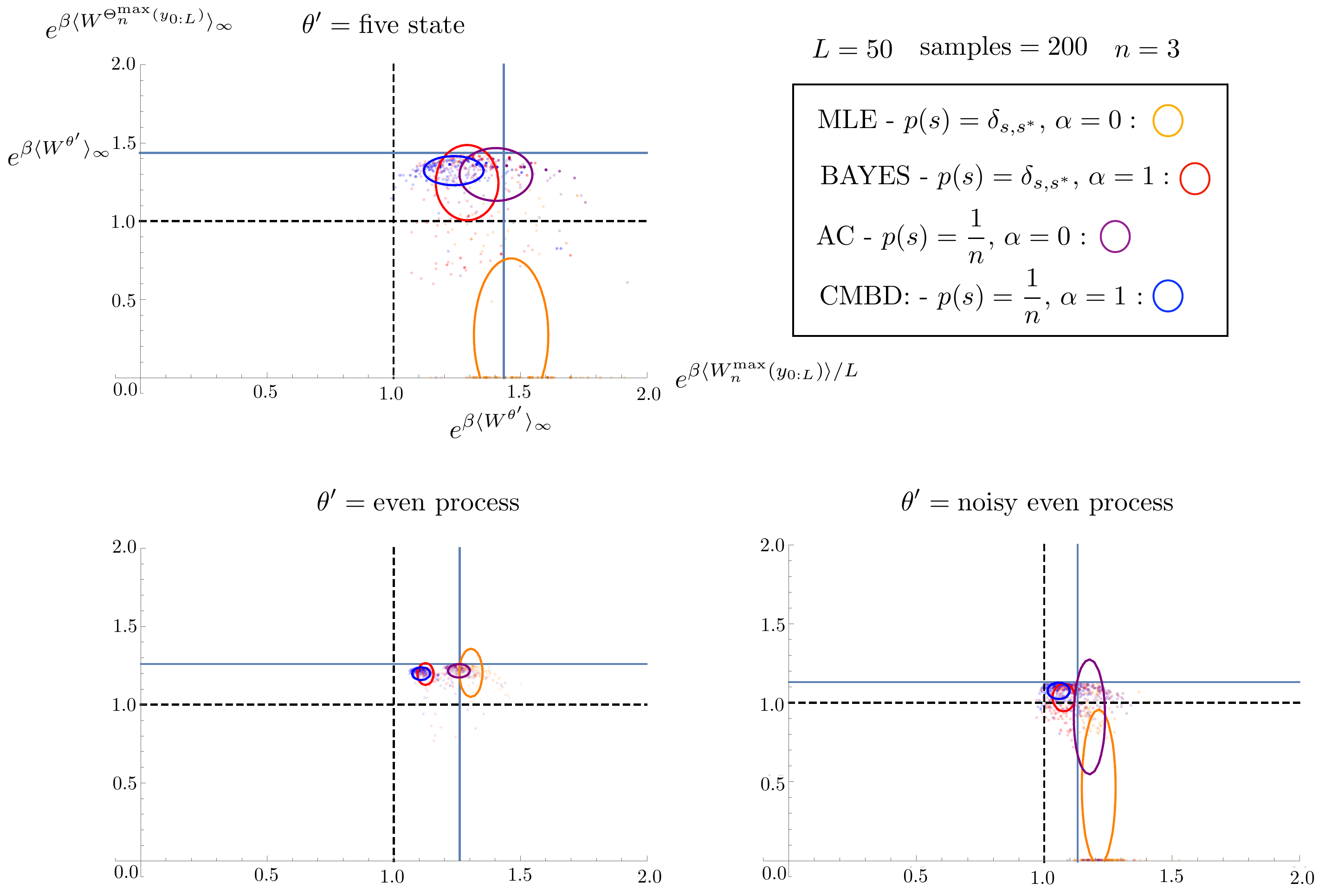}
\caption{Thermodynamic Machine Learning Performance for Different Processes: For each input process $\theta'$ (Five-State, Even, and Noisy Even), we randomly sample $200$ length $L=50$ words and train a 3-state $\epsilon$-machines on each to find the exponential training work rate $e^{\beta \langle W^\text{max}_{n=3}(y_{0:L})\rangle/L}$ and the exponential testing work rate $e^{\beta \langle W^{\Theta^\text{max}_{n=3}(y_{0:L})}\rangle_\infty}$ for each regularization strategy: MLE (orange), BAYES (red), AC (purple), and CMBD (blue).  The ovals are centered around the average work rates of these $200$ samples, and their dimensions are given by the variance of the work rates.  The dashed black lines represent work rates of zero along each dimension, and the blue lines represent the theoretical limit on the asymptotic work rate, given by $e^{\beta \langle W^{\theta'} \rangle}$.}
\label{fig:PerformanceScatter}
\end{figure*}

\begin{figure*}[tbp]
\centering
\includegraphics[width=\columnwidth]{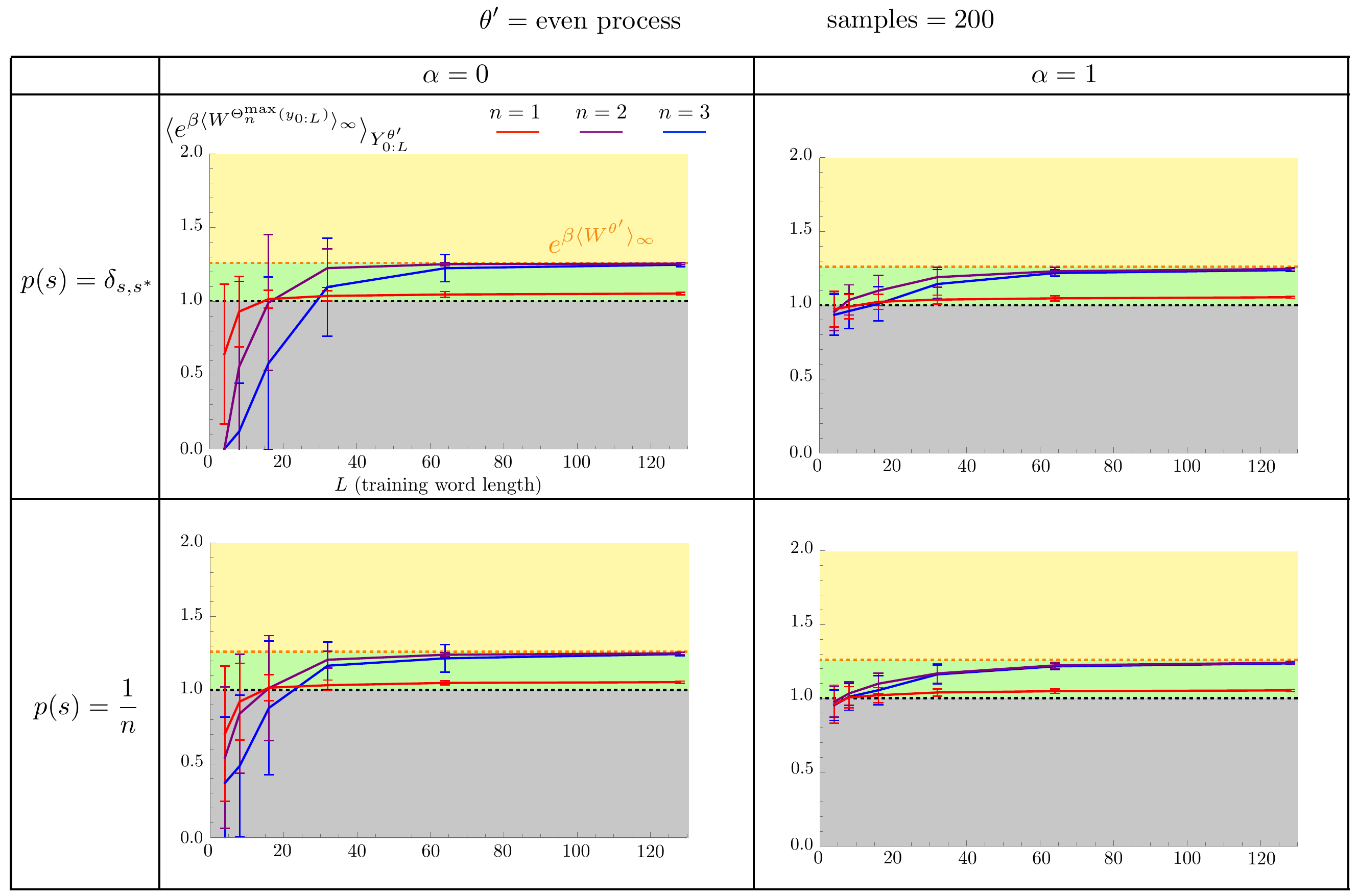}
\caption{The average and variance of the exponential asymptotic testing work rate that results from the four different learning strategies.  We generate $200$ words from the even process for each length $L \in \{ 4,8,16,32,64,128\}$, then train on each using MLE, BAYES, AC, and CMBD.  The un-regularized thermodynamic machine learning does a fairly good job of fitting, in comparison to the Five-State process.  The AC method is a slight improvement over MLE, but not considerable.  The BAYES and CMBD regularization techniques have less divergent work production for small training words, and seem to perform slightly better in general.  One notable distinction between the BAYES and CMBD methods is that there is a clear disadvantage to using larger memory than necessary ($n=3$ instead of $n=2$) for the BAYES technique, but the CMBD technique does not seem to have this difference.}
\label{fig:EvenComparison}
\end{figure*}

\begin{figure*}[tbp]
\centering
\includegraphics[width=\columnwidth]{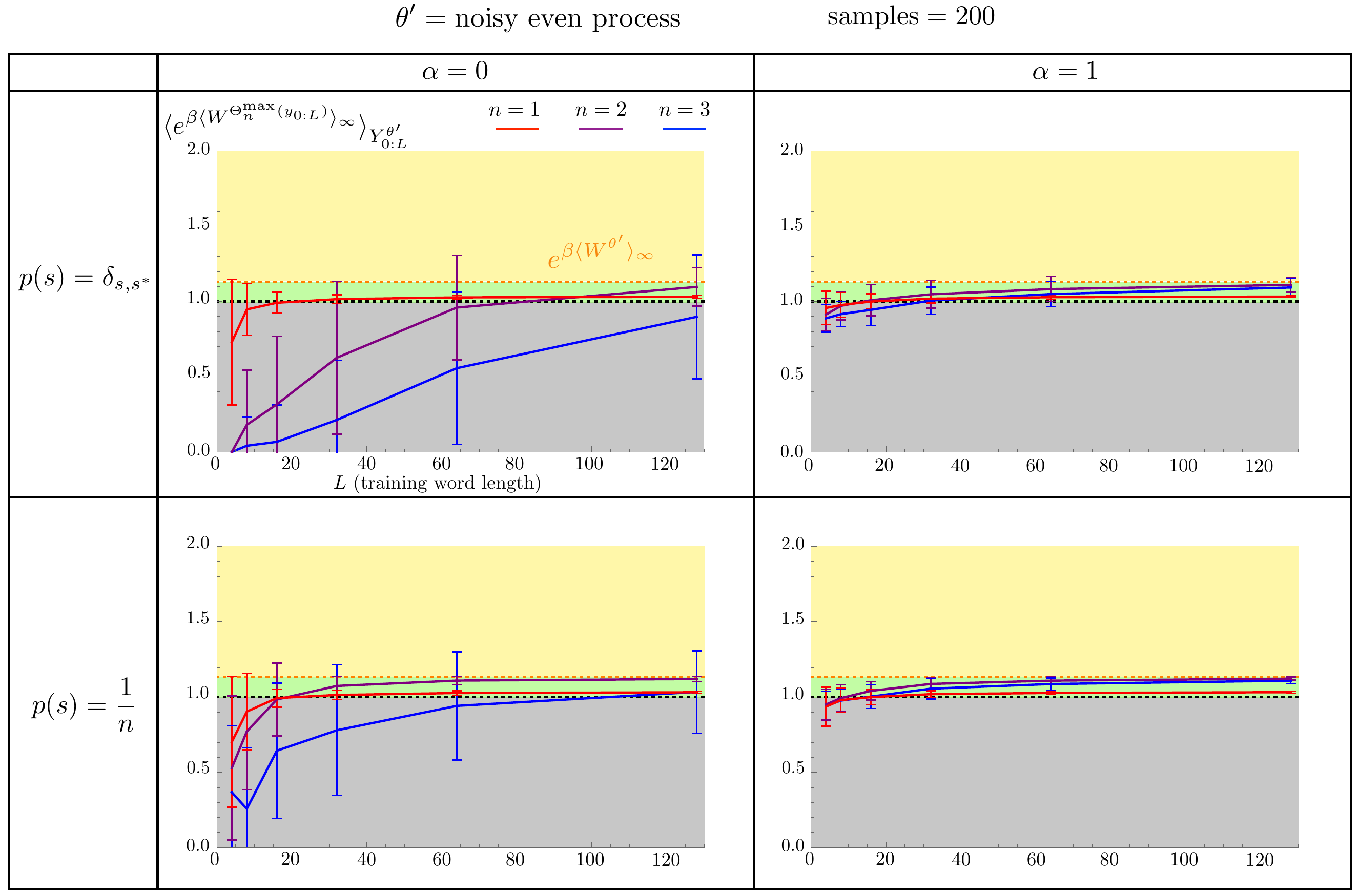}
\caption{The average and variance of the exponential asymptotic testing work rate that results from the four different learning strategies.  We generate $200$ words from the noisy even process for each length $L \in \{ 4,8,16,32,64,128\}$, then train on each using MLE, BAYES, AC, and CMBD. This process appears much harder to learn than the even process for the unregularized MLE and AC techniques.  The BAYES technique does better, even after 128 training words, the variance in the outcome is still relatively large, indicating that many samples are failing to effectively learn the process.  By contrast, the CMBD technique appears to approach the upper bound on work production, with variance that's relatively small for memory sizes $n=2$ and $n=3$.}
\label{fig:NoisyEvenComparison}
\end{figure*}

In contrast to the ``Five-State Process,''  the ``Even Process'' and ``Noisy Even Process'' require only two states for perfect prediction, so they reveal the case when we have more memory accessible ($n=3$) than is strictly necessary.  In this case, the true model is contained within our class of model candidates.  

The ``Noisy Even Process'' is an interesting case for the same reason, but it has full support in the possible words that it can produce.  A good learning algorithm should robustly learn patterns that may contain rare fluctuations and noise, and the ``Noisy Even Process'' behaves much like the ``Even process,'' but its rare fluctuations are extremely important to thermodynamic behavior \cite{crutchfield2016not}.

In Fig \ref{fig:PerformanceScatter}, we compare the training work rate to the testing work rate for $200$ samples for all three processes.  We see in all cases that the MLE strategy works best for training data and words for test data, while the CMBD method works worst for training and best for testing.  However, the relative effectiveness of the BAYES and AC method changes based on which true process is being sampled.  This suggests that autocorrection serves to address some features of processes, while Bayesian edge-weight updates serve to address others.

The even process appears to be easy to learn for all cases, as seen in Fig. \ref{fig:EvenComparison}.  The unregularized MLE strategy appears to do worst, because its variance is the highest while its average is comparable to the regularized learning techniques.  BAYES, AC, and CMBD all appear to have comparable results for the testing work rate, with BAYES perhaps performing the best.  In this case, unlike for the Five-State process, we get to see what happens when our model has enough memory to fully capture the pattern in the data.  Interestingly, additional memory then comes at a cost for all regularization techniques, which can be identified by noting that the blue curve ($n=3$) lies at or below the purple curve ($n=2$).  This may be because of the fact that allowing three memory states simply creates more opportunity for overfitting when the process is fully described by two predictive states.  However, the CMBD strategy seems to mostly mitigate the overfitting, because the blue curve is very close to the purple one.

Figure \ref{fig:NoisyEvenComparison} shows that the noisy even process appears to be much more difficult to learn than the even process.  We hypothesize that this is because we need to include words that are rare fluctuations to fully see the support of the process.  When a model doesn't anticipate those words, it results in overfitting and negative work production.  We see that the unregularized MLE and AC techniques encounter this problem frequently.  By contrast, the BAYES and CMBD strategies seem to discover the underlying pattern after training on length-$128$.  However, the BAYES technique still yields high variance, meaning that some outcomes are dissipating a lot of work.  By contrast, the CMBD algorithm has low variance and its average is nearly the theoretical limit on work harvesting, indicating that it is reliably discovering the pattern.

\end{enumerate}
\clearpage
\twocolumngrid
\bibliography{short}

\end{document}